\newtheorem{theorem}{Theorem}
\newtheorem{proposition}{Proposition}
\newtheorem{remark}{Remark}
\newtheorem{example}{Example}
\newtheorem{definition}{Definition}
\newtheorem{lemma}{Lemma}
\newenvironment{proof}[1][Proof]{\noindent\textbf{#1.} }{\ \rule{0.5em}{0.5em}}
\newcommand{\by}{\mathbf{y}}
\newcommand{\bY}{\mathbf{Y}}
\newcommand{\bx}{\mathbf{x}}
\newcommand{\bX}{\mathbf{X}}
\newcommand{\bz}{\mathbf{z}}
\newcommand{\bZ}{\mathbf{Z}}
\newcommand{\bH}{\mathbf{H}}
\newcommand{\bK}{\mathbf{K}}
\newcommand{\bA}{\mathbf{A}}
\newcommand{\ba}{\mathbf{a}}
\newcommand{\bc}{\mathbf{c}}
\newcommand{\bC}{\mathbf{C}}
\newcommand{\bd}{\mathbf{d}}
\newcommand{\bD}{\mathbf{D}}
\newcommand{\bB}{\mathbf{B}}
\newcommand{\bU}{\mathbf{U}}
\newcommand{\bu}{\mathbf{u}}
\newcommand{\bV}{\mathbf{V}}
\newcommand{\bv}{\mathbf{v}}
\newcommand{\CV}{\mathcal{V}}
\newcommand{\bw}{\mathbf{w}}
\newcommand{\bW}{\mathbf{W}}
\newcommand{\bt}{\mathbf{t}}
\newcommand{\bT}{\mathbf{T}}
\newcommand{\be}{\mathbf{e}}
\newcommand{\bE}{\mathbf{E}}
\newcommand{\bR}{\mathbf{R}}
\newcommand{\bI}{\mathbf{I}}
\newcommand{\bL}{\mathbf{L}}
\newcommand{\bg}{\mathbf{g}}
\newcommand{\bG}{\mathbf{G}}
\newcommand{\bQ}{\mathbf{Q}}
\newcommand{\bq}{\mathbf{q}}
\newcommand{\bs}{\mathbf{s}}
\newcommand{\bS}{\mathbf{S}}
\newcommand{\bF}{\mathbf{F}}
\newcommand{\bff}{\mathbf{f}}
\newcommand{\ZZ}{\mathbb{Z}}
\newcommand{\RR}{\mathbb{R}}
\newcommand{\Tsnr}{\mathsf{SNR}}
\newcommand{\Ql}{Q_{\Lambda}}
\newcommand{\Mod}{\bmod\Lambda}
\newcommand{\trace}{\mathop{\mathrm{trace}}}
\newcommand{\diag}{\mathop{\mathrm{diag}}}
\newcommand{\Span}{\mathop{\mathrm{span}}}
\newcommand{\rank}{\mathop{\mathrm{rank}}}
\newcommand{\Vol}{\mathrm{Vol}}
\newcommand{\Cwi}{C_{\text{WI}}}
\newcommand{\bZe}{\mathbf{Z}_{\text{eff}}}
\DeclareMathOperator*{\argmin}{\arg \min}
\begin{document}

\title{Successive Integer-Forcing and its Sum-Rate Optimality}

\author{\IEEEauthorblockN{Or Ordentlich}
\IEEEauthorblockA{Tel Aviv University\\
ordent@eng.tau.ac.il}
\and
\IEEEauthorblockN{Uri Erez}
\IEEEauthorblockA{Tel Aviv University\\
uri@eng.tau.ac.il}
\and
\IEEEauthorblockN{Bobak Nazer}
\IEEEauthorblockA{Boston University\\
bobak@bu.edu}
\IEEEoverridecommandlockouts
\IEEEcompsocitemizethanks{
\IEEEcompsocthanksitem
The work of U. Erez was supported in part by the Israel Science Foundation under Grant No. 1557/13. The work of O. Ordentlich was supported by the Adams Fellowship Program of the Israel Academy of Sciences and Humanities, a fellowship from The Yitzhak and Chaya Weinstein Research Institute for Signal Processing at Tel Aviv University and the Feder Family Award. The work of B. Nazer was supported by NSF grant CCF-1253918.
}}

\maketitle

\begin{abstract}
Integer-forcing receivers generalize traditional linear receivers for the multiple-input multiple-output channel by decoding integer-linear combinations of the transmitted streams, rather then the streams themselves. Previous works have shown that the additional degree of freedom in choosing the integer coefficients enables this receiver to approach the performance of maximum-likelihood decoding in various scenarios. Nonetheless, even for the optimal choice of integer coefficients, the additive noise at the equalizer's output is still correlated. In this work we study a variant of integer-forcing, termed successive integer-forcing, that exploits these noise correlations to improve performance. This scheme is the integer-forcing counterpart of successive interference cancellation for traditional linear receivers. Similarly to the latter, we show that successive integer-forcing is capacity achieving when it is possible to optimize the rate allocation to the different streams. In comparison to standard successive interference cancellation receivers, the successive integer-forcing receiver offers more possibilities for capacity achieving rate tuples, and in particular, ones that are more balanced.
\end{abstract}

\section{Introduction}
\label{sec:Intro}
The integer-forcing (IF) linear receiver architecture, proposed in~\cite{zneg12IT}, provides an alternative to standard linear receivers for the Gaussian multiple-input multiple-output (MIMO) channel. Classical architectures, such as zero-forcing (ZF) and linear minimum mean-squared error (MMSE) receivers, first equalize the channel to the identity matrix $\bI$ and then decode each data stream separately via single-user decoders. While this reduces the implementation complexity (as compared to jointly decoding the data streams), it comes at the cost of a significant rate loss. This is due to the fact that, after equalization, the total noise power is spread unevenly across data streams. If there is no channel state information at the transmitter (CSIT), it is not possible to allocate rates to compensate for this noise perturbation. The main advantage of the IF receiver over classical linear receivers is that it has the freedom to equalize the channel to any full-rank integer-valued matrix $\bA$. This helps the receiver reduce the correlations between the noises experienced by each of its single-user decoders, and balance the noise power across them. Using the compute-and-forward strategy \cite{ng11IT}, each of these decoders then recovers an integer-linear combination of the data streams. Finally, the resulting noise-free linear combinations are solved for the desired data streams. We note that the complexity of the IF receiver is comparable\footnote{The additional complexity comes from computing the target integer-valued matrix $\mathbf{A}$. This is required only once per coherence interval, meaning that the added complexity is negligible for long coherence times.} to that of a classical ZF or linear MMSE receiver.

Recent work~\cite{oe13} has shown that the IF receiver can attain the capacity of the Gaussian MIMO channel to within a constant number of bits in an open-loop scenario (no CSIT), provided that an appropriate universal linear precoding operation is applied at the transmitter. Moreover, even without precoding at the transmitter, it is shown in~\cite{oe13} that for almost every channel matrix the IF receiver attains the total degrees-of-freedom (DoF) offered by the channel, even when the number of receive antennas is smaller than the number of transmit antennas and is unknown at the transmitter. This is in sharp contrast to standard linear receivers that cannot achieve any DoF in such scenarios. As an example consider the $M$-user Gaussian multiple-access channel (MAC) where each user is equipped with one transmit antenna and the receiver is also equipped with a single antenna. Obviously, applying the linear MMSE equalizer on the channel's output would result in highly suboptimal performance, as there are not enough observations to separate the transmitted signals at the receiver. With IF equalization, on the other hand, the ratios between the individual rates achieved by each user and the symmetric capacity tend to one for almost all channel gains as the signal-to-noise ratio (SNR) increases~\cite{CoFTransformFull}.

Beyond its role as a low-complexity receiver architecture, IF also has several theoretical advantages. In particular, IF equalization exploits the closure of linear/lattice codebooks w.r.t.  integer-linear combinations. In the last decade, lattice codes were found to play a key role in characterizing the fundamental limits of certain communication networks, see e.g.~\cite{pzek11,bpt10,ng11IT,wnps10,ncl10,CoFTransformFull}. A common feature of several of these lattice-based coding schemes is that, from the perspective of each receiver, they induce effective multiple-access channels with a reduced number of users, all of which employ the same lattice codebook. The achievable rates for a MAC where all users use the same lattice codebook is difficult to analyze~\cite{oe13IT}, but can be lower bounded by the rates attained via the IF receiver~\cite{CoFTransformFull}.

The performance of standard linear receivers can be improved using successive interference cancelation (SIC). The key idea is to use decoded streams to improve the channel quality for decoding subsequent streams. In this paper, we develop and analyze an analogous scheme for IF, dubbed \emph{successive integer-forcing}. That is, the receiver will use decoded {linear combinations} in order to improve the channel quality for decoding the remaining {linear combinations}. The idea of applying SIC to IF has been partially explored in the literature. In~\cite{znoeg10}, a successive decoding procedure was developed for the IF receiver in the setting where the number of receive antennas is at least as high as the number of transmit antennas. However, the optimal filter design was not found, meaning that the obtained achievable rates are suboptimal. In~\cite{nazer12IZS}, a successive procedure was developed for decoding two linear combinations over an $M$-user Gaussian MAC. The optimal filters and the highest achievable rates were found for this scenario.

The contribution of the present work is a successive IF scheme that is suitable for any number of transmit and receive antennas, and any number of desired linear combinations. Through standard linear filtering theory, we derive closed-form expressions for the optimal filters and characterize the corresponding achievable rates. We also show that the optimal integer coefficients for successive IF can be obtained using Korkin-Zolotarev lattice basis reduction. Thus, in contrast to standard IF, the optimal integer-valued matrix $\mathbf{A}$ (in terms of achievable rate) for successive IF is always unimodular.

The IF scheme is most advantageous in an open-loop scenario where the transmitter does not know the channel gains. In this case, it does not know how to allocate rates to the different streams, and therefore a single codebook is usually used for encoding each of the transmitted streams. However, as mentioned above, the IF scheme is also of theoretical interest in the context of network communication problems, where it is commonly assumed that all transmitters know all channel gains. In this case, the rates of the different streams transmitted in the IF scheme can be appropriately allocated. We show that with a judicious rate allocation, under several technical conditions, successive IF can achieve rate tuples whose sum equals the channel's multiple-access sum-capacity. While this property is shared with the standard MMSE-SIC equalizer~\cite{vg97}, successive IF equalization achieves additional sum-rate optimal rate tuples that are not attained by MMSE-SIC equalization without rate-splitting or time-sharing. These rate tuples lie closer to the symmetric capacity than those attained by the MMSE-SIC equalizer.

\section{Preliminaries}
\label{sec:Preliminaries}

In this section, we review some key results that will be useful in our derivation of the successive IF receiver. Throughout the paper, lowercase boldface variables will refer to column or row vectors (e.g., $\bx \in \mathbb{R}^{n\times 1}$ or $\bx \in \mathbb{R}^{1\times n}$). In general, we use row vectors for vectors whose entries correspond to different time indices and column vectors for vectors whose entries correspond to different spatial indices.
Uppercase boldface variables will refer to matrices (e.g., $\mathbf{X} \in \mathbb{R}^{M \times n}$). For a given matrix $\mathbf{X}$, we denote its transpose by $\mathbf{X}^T$ and its Forbenius norm as $\|\mathbf{X}\|_F$. We denote the identity matrix by $\mathbf{I}$ where the dimensions will be clear from the context. All logarithms are taken to base $2$ and rates are measured in bits per channel use.

We will focus on the real-valued Gaussian multiple-input multiple-output (MIMO) channel with $M$ transmit and $N$ receive antennas. The channel output is given by
\begin{align}
\bY=\bH\bX+\bZ\label{channelmodel}
\end{align} where $\bH \in \RR^{N\times M}$ is the channel matrix, $\bX \in \RR^{M \times n}$ is the channel input across the $M$ transmit antennas over $n$ channel uses, and $\bZ \in \RR^{N \times n}$ is additive noise that is elementwise i.i.d. Gaussian with zero mean and unit variance. The channel input is subject to the power constraint
\begin{align}
\frac{1}{n}\mathbb{E}\|\bX\|_F^2\leq M\cdot \Tsnr \ .\nonumber
\end{align}

\begin{remark} Following the steps in~\cite[Appendix C]{ng11IT}, we can show that our results also hold under a strict power constraint $\frac{1}{n} \|\bX\|_F^2 \leq M \cdot \Tsnr$.
\end{remark}

\begin{remark}
All of our results can be immediately extended to $M \times N$ complex-valued MIMO channels by expressing these channels in terms of their $2M \times 2N$ real-valued decomposition. See~\cite{zneg12IT} for more details.
\end{remark}

\subsection{MIMO Capacity}

The capacity of the MIMO channel is given by~\cite{telatar99}
\begin{align}
C=\max_{\substack{\bQ\succ 0 \\  \trace({\bQ})\leq M\cdot\Tsnr}}\frac{1}{2}\log\det\left(\bI+\bQ\bH^{T}\bH\right).\label{Capacity}
\end{align}
The choice of $\bQ$ that maximizes~\eqref{Capacity} is determined by the water-filling solution.
Often, the suboptimal choice \mbox{$\bQ=\Tsnr\cdot\bI$} is used, resulting in the \emph{white-input (WI) mutual information}
\begin{align}
\Cwi=\frac{1}{2}\log\det\left(\bI+\Tsnr \ \bH^{T}\bH\right).\nonumber
\end{align}

\begin{remark}
Note that if we restrict the $M$ rows of the channel input $\bX$ to be independent with power constraints $\frac{1}{n} \mathbb{E}\|\bx_m\|^2\leq \Tsnr$ for $m=1,\ldots,M$, the channel model~\eqref{channelmodel} describes an $M$-user Gaussian MAC, where each user is equipped with a single transmit antenna and the receiver with $N$ antennas. In this case, $\Cwi$ is the sum-capacity~\cite[Eq. (15.153)]{coverthomas}.
\end{remark}

\subsection{Successive Interference Cancelation via Noise-Prediction}
\label{subsec:SICnp}

In the sequel, we will show that successive IF can achieve $\Cwi$, provided that the transmitter can allocate rates correctly to the different streams. In order to gain some intuition, we review the standard MMSE-SIC scheme, also known as V-BLAST. However, rather than describing the receiver's decoding procedure in the common way, where it successively decodes streams and subtracts them to get a ``cleaner'' channel, we describe a receiver that performs \emph{noise prediction}. These two variants are known to be equivalent~\cite{fischer02}.

Assume each of the $M$ antennas transmits a stream $\bx_m\in\RR^{1\times n}$ of length $n$ taken from an i.i.d. Gaussian code with average power $\Tsnr$, independent of the streams transmitted by the other antennas. The receiver decodes the transmitted streams in $M$ successive steps, where in each step a single stream is decoded. The receiver first performs linear MMSE estimation of $\bX=[\bx_1^T \ \cdots \ \bx_M^T]^T$ from $\bY$ using the filter matrix
\begin{align}
\bB={\bH}^T\left(\frac{1}{\Tsnr}\bI+{\bH}{\bH}^T\right)^{-1}.\label{LMMSEmat}
\end{align}
This gives rise to the effective channel
\begin{align}
\bY_{\text{eff}}=\bB\bY=\bX+\bE,\nonumber
\end{align}
where
\begin{align}
\bE=(\bB\bH-\bI)\bX+\bB\bZ,\nonumber
\end{align}
is the estimation error. 
Since $\bX$ and $\bZ$ are statistically independent and all their entries are i.i.d. Gaussian, the columns of $\bE$ are Gaussian vectors, statistically independent of each other. Each column of $\bE$, corresponding to a different time index, is a zero-mean Gaussian vector with covariance matrix
\begin{align}
\bK_{\be\be}&=\Tsnr(\bB\bH-\bI)(\bB\bH-\bI)^T+\bB\bB^T\label{LMMSEtmp}\\
&=\Tsnr(\bI+\Tsnr\ \bH^T\bH)^{-1},\label{LMMSE}
\end{align}
where~\eqref{LMMSE} follows by substituting~\eqref{LMMSEmat} into~\eqref{LMMSEtmp} and applying Woodbury's matrix identity (i.e., the Matrix Inversion Lemma)~\cite[Thm 18.2.8]{harville}.

The matrix $(\bI+\Tsnr\ \bH^T\bH)^{-1}$ is symmetric and positive definite, and therefore admits a (unique) Cholesky decomposition
\begin{align}
\left(\bI+\Tsnr\ \bH^T\bH\right)^{-1}=\bG\bG^T,\label{cholnoA}
\end{align}
where $\bG\in\RR^{M\times M}$ is a lower triangular matrix with strictly positive diagonal entries. It follows that
\begin{align}
\bK_{\be\be}=\Tsnr \ \bG\bG^T,
\end{align}
and $\bE$ can be written as $\bE=\sqrt{\Tsnr}\ \bG\bW$, where $\bW$ is an \mbox{$M\times n$} matrix of i.i.d. Gaussian entries with zero mean and unit variance. Thus, the effective channel can be written as
\begin{align}
\bY_{\text{eff}}=\bX+\sqrt{\Tsnr}\ \bG\bW.\label{effchan_np}
\end{align}
Now, the receiver starts successively decoding the streams. First, it uses $\by_{\text{eff},1}$, the first row of $\bY_{\text{eff}}$, to decode the first stream $\bx_1$. Denoting the $(i,j)$-th entry of $\bG$ by $g_{ij}$, this can be done as long as the rate of this stream satisfies
\begin{align}
R_1&<\frac{1}{2}\log\left(1+\frac{\Tsnr}{g_{11}^2\Tsnr}-1 \right)\nonumber\\
&=-\frac{1}{2}\log\left(g_{11}^2\right).\nonumber
\end{align}
where the $-1$ term inside the logarithm compensates for the fact that $\bE$ and $\bX$ are correlated, as explained in~\cite[Lemma 2]{cdef95}.
After correctly decoding $\bx_1$, the receiver can obtain $\bw_1$, the first row of $\bW$, as
\begin{align}
\bw_1=\frac{\by_{\text{eff},1}-\bx_1}{\sqrt{\Tsnr}g_{11}} \ ,
\end{align}
and produce a less noisy channel
\begin{align}
\bY_{\text{eff}}^{(2)}&=\bY_{\text{eff}}-\sqrt{\Tsnr}\bg_1\bw_1\nonumber\\
&=\bX+\sqrt{\Tsnr}\cdot\bG^{(2)}\bW,\nonumber
\end{align}
where $\mathbf{g}_1$ is the first column of $\bG$ and \mbox{$\bG^{(2)}=\bG-\mathbf{g}_1$} is the matrix $\bG$ with its first column nulled out. Now, the receiver can decode $\bx_2$ from $\by_{\text{eff},2}^{(2)}$, the second row of $\bY_{\text{eff}}^{(2)}$, as long as its rate satisfies
\begin{align}
R_2<-\frac{1}{2}\log\left(g_{22}^2\right).\nonumber
\end{align}
Continuing in the same manner, it follows that each stream can be decoded reliably as long as
\begin{align}
R_m<-\frac{1}{2}\log\left(g_{mm}^2\right), \ \ \ m=1,\cdots,M.\nonumber
\end{align}
The described noise prediction scheme can therefore achieve the sum-rate
\begin{align}
\sum_{m=1}^M R_m&=-\frac{1}{2}\sum_{m=1}^M \log\left(g_{mm}^2\right)\nonumber\\
&=-\frac{1}{2}\log\left(\prod_{m=1}^M g_{mm}^2 \right)\nonumber\\
&=-\frac{1}{2}\log\det\left(\bG\bG^T\right)\nonumber\\
&=\frac{1}{2}\log\det\left(\bI+\Tsnr\bH^T\bH\right)\nonumber\\
&=\Cwi.\nonumber
\end{align}
In the sequel, we will see that a similar noise prediction scheme enables successive integer-forcing to achieve a sum-rate of $\Cwi$. Rather than performing linear MMSE estimation for $\bX$, in successive integer-forcing one estimates the linear combinations $\bA\bX$, and successively predicts the associated estimation errors.

\subsection{Integer-Forcing}
\label{subsec:IF}

IF equalization is a low-complexity architecture for the MIMO channel, which was proposed by Zhan \textit{et al}.~\cite{zneg12IT}. The key idea underlying IF is to first decode integer-linear combinations of the signals transmitted by all antennas, and then, after the noise is removed, invert those linear combinations to recover the individual transmitted signals. This is made possible by transmitting codewords from the \emph{same} linear\slash lattice code from all $M$ transmit antennas, leveraging the property that linear codes are closed under (modulo) linear combinations with integer-valued coefficients.

We briefly recall the IF scheme. We begin by presenting several lattice definitions. A lattice $\Lambda$ is a discrete subgroup of $\RR^n$ which is closed under reflection and real addition. We denote the nearest neighbor quantizer associated with the lattice $\Lambda$ by
\begin{align}
\Ql(\bx)=\argmin_{\mathbf{t}\in\Lambda}\|\bx-\mathbf{t}\|.
\label{NNquantizer}
\end{align}
The basic Voronoi region of $\Lambda$, denoted by $\CV$, is the set of all points in $\RR^n$ which are quantized to the zero vector, where ties in~\eqref{NNquantizer} are broken in a systematic manner. The modulo operation returns the quantization error w.r.t. the lattice,
\begin{align}
\left[\bx\right]\Mod=\bx-\Ql(\bx).\nonumber
\end{align}
and the second moment of $\Lambda$ is defined as
\begin{align}
\sigma^2(\Lambda)=\frac{1}{n}\frac{1}{\Vol(\CV)}\int_{\bu\in\CV}\|\bu\|^2d\bu,\nonumber
\end{align}
where $\Vol(\CV)$ is the volume of $\CV$.
A lattice $\Lambda$ is said to be nested in $\Lambda_1$ if $\Lambda\subseteq\Lambda_1$.
The coding scheme presented in this paper utilizes a pair of $n$-dimensional nested lattices $\Lambda_c\subset\Lambda_f$, where $\Lambda_c$ is referred to as the coarse lattice and $\Lambda_f$ as the fine lattice. A nested lattice codebook $\mathcal{C}=\Lambda_f\cap\CV_c$, with rate
\begin{align}
R=\frac{1}{n}\log\left|\Lambda_f\cap\CV_c\right| \nonumber
\end{align}
is associated with the nested lattice pair. The codebook is scaled such that $\sigma^2(\Lambda_c)=\Tsnr$.
In Section~\ref{subsec:sumrateopt} we extend the proposed coding scheme to one that uses a chain of $M+1$ nested lattices
\begin{align}
\Lambda_c\subseteq\Lambda_{f_M}\subseteq\cdots\subseteq\Lambda_{f_1},\label{nestedchain}
\end{align}
from which we construct $M$ nested codebooks.

In the IF scheme, the information bits to be transmitted are partitioned into $M$ streams. Each of these streams is encoded by the nested lattice code $\mathcal{C}$, producing $M$ row vectors, each in $\mathcal{C}\subset\RR^{1\times n}$. In particular, the $m$th stream, consisting of $nR$ information bits, is mapped to a lattice point $\bt_{m}\in\mathcal{C}$. Then, a random dither\footnote{These random dithers can be replaced with deterministic dithers without affecting the achievable rate region, i.e., no common randomness is necessary. See \cite[Appendix C]{ng11IT} for more details.} $\bd_{m}\in\RR^{1\times n}$ uniformly distributed over $\CV_c$ and statistically independent of $\bt_{m}$, known to both the transmitter and the receiver, is used to produce the signal
\begin{align}
\bx_{m}=\left[\bt_{m}-\bd_{m}\right]\Mod_c\nonumber.
\end{align}
The signal $\bx_{m}$ is uniformly distributed over $\CV_c$ and is statistically independent of $\bt_{m}$ due to the Crypto Lemma~\cite[Lemma 1]{ez04}. It follows that
\begin{align}
\frac{1}{n}\mathbb{E}\|\bx_{m}\|^2=\sigma^2(\Lambda_c)=\Tsnr.\nonumber
\end{align}
The $m$th antenna transmits the signal $\bx_{m}\in\RR^{1\times n}$ over $n$ consecutive channel uses. 


Define \mbox{$\bT\triangleq[\bt_1^T \ \cdots \ \bt_M^T]^T$} to be the $M\times n$ matrix whose rows consist of the lattice points corresponding to the $M$ data streams, \mbox{$\bD\triangleq[\bd_1^T \ \cdots \ \bd_M^T]^T$} be the $M\times n$ matrix whose rows correspond to the $M$ dither vectors, and $\bX\triangleq[\bx_1^T \ \cdots \ \bx_M^T]^T$ be the matrix whose rows correspond to the $M$ channel input vectors. These inputs vectors are transmitted into the channel~\eqref{channelmodel} to yield the $N \times n$ output $\bY$.

The IF receiver chooses an equalizing filter matrix $\bB\in\RR^{M\times N}$ and a full-rank target integer-valued matrix $\bA\in\ZZ^{M\times M}$, and computes
\begin{align}
\bY_{\text{eff}}&=\left[\bB\bY+\bA\bD\right]\Mod_c\nonumber\\
&=\left[\bA{\bX}+\bA{\bD}+(\bB{\bH}-\bA){\bX}+\bB{\bZ}\right]\Mod_c\nonumber\\
&=\left[\bA{\bT}+(\bB{\bH}-\bA){\bX}+\bB{\bZ}\right]\Mod_c\nonumber\\
&=\left[\bV+\bZe\right]\Mod_c,\label{eqoutput}
\end{align}
where
\begin{align}
\bV\triangleq\left[\bA{\bT}\right]\Mod_c
\label{modeq}
\end{align}
is an $M\times n$ real-valued matrix with each row being a codeword in $\mathcal{C}$ owing to the linearity of the code,
\begin{align}
\bZe\triangleq (\bB{\bH}-\bA){\bX}+\bB{\bZ}\label{Bze}
\end{align}
is additive noise statistically independent of $\bV$ (since ${\bX}$ and ${\bZ}$ are statistically independent of ${\bT}$), and the notation $\mod \Lambda_c$ is to be understood as reducing \emph{each row} of the obtained matrix modulo the coarse lattice.
Each row of ${\bY}_{\text{eff}}$, denoted by $\by_{\text{eff},m}$, $m=1,\ldots,M$, is the modulo sum of a codeword and effective noise. Thus, the IF receiver transforms the original MIMO channel into a set of $M$ point-to-point modulo-additive sub-channels
\begin{align}
\by_{\text{eff},m}=\left[\bv_m+\bz_{\text{eff},m}\right]\Mod_c, \ \ m=1,\ldots,M.\label{effsubchannel}
\end{align}
The IF receiver decodes the output of each sub-channel separately. If the decoding is successful over all $M$ sub-channels, the receiver has access to $\bV=[\bv_1^T \ \cdots \ \bv_M^T]^T$, from which it can recover the matrix ${\bT}$ by solving the (modulo) set of equations.

Define the effective variance of $\bz_{\text{eff},m}$ as
\begin{align}
\sigma^2_{\text{eff},m}&\triangleq\frac{1}{n}\mathbb{E}\left\|\bz_{\text{eff},m}\right\|^2.\nonumber
\end{align}
It follows from~\cite{ng11IT,ez04} that for a ``good'' (capacity-achieving) nested lattice code $\mathcal{C}$ the integer-linear combination $\bv_m$ can be reliably decoded from $\by_{\text{eff},m}$ as long as
\begin{align}
R<\frac{1}{2}\log\left(\frac{\Tsnr}{\sigma^2_{\text{eff},m}}\right),\nonumber
\end{align}
and all $M$ equations can be decoded reliably if
\begin{align}
R<\min_{m=1,\ldots,M} \frac{1}{2} \log\left(\frac{\Tsnr}{\sigma^2_{\text{eff},m}}\right).\nonumber
\end{align}
Note that the additive noise vectors $\bz_{\text{eff},1},\ldots,\bz_{\text{eff},M}$ are not statistically independent. Thus, treating the $M$ sub-channels as parallel is suboptimal, and some improvement can be obtained by exploiting this coupling. In the next section, we will show how successive IF exploits the aforementioned noise correlations to enhance performance, with only a slight increase in the decoding complexity, i.e., the receiver still decodes the linear combinations one-by-one.

\subsection{Linear MMSE Estimation and Generalizations to Matrix Estimation}
\label{subsec:LMMSE}
The derivation of the optimal filters for successive IF involves several results from linear MMSE estimation.

Consider a random vector $\bx\in\RR^{M\times 1}$ with zero mean and covariance matrix \mbox{$\left\{\bK_{\bx\bx} \right\}_{ij}=\mathbb{E}(x_i x_j)$} and a random vector of measurements $\by\in\RR^{N\times 1}$ with zero mean and covariance matrix \mbox{$\left\{\bK_{\by\by} \right\}_{ij}=\mathbb{E}(y_i y_j)$}. The cross-covariance matrix between $\bx$ and $\by$ is given by \mbox{$\left\{\bK_{\bx\by} \right\}_{ij}=\mathbb{E}(x_i y_j)$}. The class of linear estimators for $\bx$ from $\by$ consists of all estimators of the form $\hat{\bx}=\bB\by$, where $\bB\in\RR^{M\times N}$. The estimation error is defined as $\be=\bx-\hat{\bx}$, and the linear MMSE criterion corresponds to minimizing $\mathbb{E}(e_i^2)$ for all $i=1,\ldots,M$ over all filters $\bB\in\RR^{M\times N}$. It is well known that the optimal estimation filter under this criterion must satisfy the orthogonality principle
\begin{align}
\mathbf{0}=\mathbb{E}\left(\be\by^T\right)=\mathbb{E}\left((\bx-\bB\by)\by^T\right),\nonumber
\end{align}
where $\mathbf{0}$ is a matrix of zeros with appropriate dimensions, and is given by
\begin{align}
\bB^*=\bK_{\bx\by}\bK_{\by\by}^{-1}.\nonumber
\end{align}
For the optimal estimator, the estimation error covariance matrix is given by
\begin{align}
\bK_{\be\be}\triangleq\mathbb{E}(\be\be^T)=\bK_{\bx\bx}-\bK_{\hat{\bx}\hat{\bx}}.\nonumber
\end{align}

In the previous subsection we have seen that the performance of the IF receiver is dictated by $\bZ_{\text{eff}}$, which can be thought of as the estimation error of $\bA\bX$ from $\bY=\bH\bX+\bZ$, when the filter $\bB$ is used. The achievable rate for IF over the $m$th sub-channel is maximized when $\sigma^2_{\text{eff},m}=\nicefrac{1}{n}\mathbb{E}\|\bz_{\text{eff},m}\|^2$ is minimized. Thus, $\bB$ should be chosen such as to minimize $\sigma^2_{\text{eff},m}$ for all $m=1,\ldots,M$. This criterion is similar to the MMSE criterion, except for the fact that here the goal is to minimize the effective variance of a (non-i.i.d.) vector, rather than the variance of a random variable. However, as we now show, the two problems are equivalent if we replace the covariance matrices of random vectors, whose entries correspond to the correlations between random variables, with \emph{generalized covariance matrices} whose entries correspond to the effective correlations between random vectors.

\begin{definition}
\label{def:gencov}
For a random matrix $\bX\in\RR^{M\times n}$ with rows $\bx_i^T \in\RR^{1\times n}$, $i=1,\ldots,M$, we define the generalized covariance matrix as
\begin{align}
\left\{\mathbf{\tilde{K}}_{\bX\bX}\right\}_{ij}\triangleq\frac{1}{n}\mathbb{E}\left(\bx_{i}^T \bx_j \right).\nonumber
\end{align}
If $\bY\in\RR^{N\times n}$ with rows $\by_j^T \in\RR^{1\times n}$, $j=1,\ldots,N$, is another random matrix, we define the generalized cross-covariance matrix of $\bX$ and $\bY$ as
\begin{align}
\left\{\mathbf{\tilde{K}}_{\bX\bY}\right\}_{ij}\triangleq\frac{1}{n}\mathbb{E}\left(\bx_{i}^T \by_j \right).\nonumber
\end{align}
\end{definition}

\vspace{1mm}

\begin{proposition}
\label{prop:gencovcalc}
Let $\bX\in\RR^{M\times n}$ and $\bY\in\RR^{N\times n}$ be two random matrices with generalized covariance matrices $\mathbf{\tilde{K}}_{\bX\bX}$ and $\mathbf{\tilde{K}}_{\bY\bY}$, respectively, and cross-covariance matrix $\mathbf{\tilde{K}}_{\bX\bY}$. Let $\bW=\bG\bX$ and $\bU=\bH\bY$ for two deterministic matrices $\bG\in\RR^{K\times M}$ and $\bH\in\RR^{L\times N}$. Then, $\mathbf{\tilde{K}}_{\bW\bU}=\bG\mathbf{\tilde{K}}_{\bX\bY}\bH^T$, and in particular, $\mathbf{\tilde{K}}_{\bW\bW}=\bG\mathbf{\tilde{K}}_{\bX\bX}\bG^T$ and $\mathbf{\tilde{K}}_{\bU\bU}=\bH\mathbf{\tilde{K}}_{\bY\bY}\bH^T$.
\end{proposition}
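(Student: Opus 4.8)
The plan is to first recast the entrywise definition of the generalized (cross-)covariance matrix as a single compact matrix identity, after which the claim reduces to a one-line computation using linearity of expectation.

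First I would observe that, by the definition of matrix multiplication, the $(i,j)$-th entry of $\bX\bY^T$ equals $\sum_{k=1}^{n} X_{ik} Y_{jk} = \bx_i^T \by_j$, i.e.\ the inner product of the $i$-th row of $\bX$ with the $j$-th row of $\bY$. Hence Definition~\ref{def:gencov} can be written succinctly as $\mathbf{\tilde{K}}_{\bX\bY} = \tfrac1n \mathbb{E}(\bX\bY^T)$, and likewise $\mathbf{\tilde{K}}_{\bX\bX} = \tfrac1n \mathbb{E}(\bX\bX^T)$ and $\mathbf{\tilde{K}}_{\bY\bY} = \tfrac1n \mathbb{E}(\bY\bY^T)$.

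Next I would substitute $\bW = \bG\bX$ and $\bU = \bH\bY$ into this expression, using $\bU^T = (\bH\bY)^T = \bY^T\bH^T$, to get $\mathbf{\tilde{K}}_{\bW\bU} = \tfrac1n \mathbb{E}(\bW\bU^T) = \tfrac1n \mathbb{E}(\bG\bX\bY^T\bH^T)$. Since $\bG$ and $\bH$ are deterministic, they can be pulled out of the expectation, yielding $\mathbf{\tilde{K}}_{\bW\bU} = \bG\left(\tfrac1n\mathbb{E}(\bX\bY^T)\right)\bH^T = \bG\,\mathbf{\tilde{K}}_{\bX\bY}\,\bH^T$, as claimed. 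The two ``in particular'' statements then follow immediately by specializing to $\bY = \bX$, $\bH = \bG$, $\bU = \bW$ (for $\mathbf{\tilde{K}}_{\bW\bW}$) and to $\bX = \bY$, $\bG = \bH$, $\bW = \bU$ (for $\mathbf{\tilde{K}}_{\bU\bU}$).

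I do not anticipate any genuine obstacle here; the proposition is essentially the bilinearity of the map $(\bX,\bY)\mapsto \tfrac1n\mathbb{E}(\bX\bY^T)$ under left multiplication by deterministic matrices. The only point requiring a little care is the transpose bookkeeping, so that the deterministic factors land on the correct sides of the expectation and the dimensions ($\bG\in\RR^{K\times M}$, $\bH\in\RR^{L\times N}$, giving $\mathbf{\tilde{K}}_{\bW\bU}\in\RR^{K\times L}$) are consistent.
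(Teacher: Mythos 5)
Your proof is correct and is essentially the same argument as the paper's: the paper carries out the identical computation entrywise, expanding $\frac{1}{n}\mathbb{E}(\bw_i^T\bu_j)$ via linearity of expectation, which is exactly your compact identity $\mathbf{\tilde{K}}_{\bW\bU}=\frac{1}{n}\mathbb{E}(\bW\bU^T)=\bG\left(\frac{1}{n}\mathbb{E}(\bX\bY^T)\right)\bH^T$ written out in coordinates. The matrix-form packaging is a cosmetic difference only; no gap.
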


\begin{proof}
\begin{align}
\left\{\mathbf{\tilde{K}}_{\bW\bU}\right\}_{ij}&=\frac{1}{n}\mathbb{E}\left(\bw_i^T \bu_j \right)\nonumber\\
&=\frac{1}{n}\mathbb{E}\left(\sum_{k=1}^K\sum_{\ell=1}^L g_{ik}\bx_k^T \by_\ell h_{j \ell} \right)\nonumber\\
&=\sum_{k=1}^K\sum_{\ell=1}^L g_{ik}\left\{\mathbf{\tilde{K}}_{\bX\bY}\right\}_{k\ell}h_{j \ell} \nonumber\\
&=\left\{\bG\mathbf{\tilde{K}}_{\bX\bY}\bH^T\right\}_{ij}.\nonumber
\end{align}
\end{proof}

\begin{lemma}
\label{lem:MMSE}
Let $\bx\in\RR^{M\times 1}$ and $\by\in\RR^{N\times 1}$ be two random vectors with zero mean, covariances $\bK_{\bx\bx}$ and $\bK_{\by\by}$ and cross-covariance matrix $\bK_{\bx\by}$. Let $\bX\in\RR^{M\times n}$ and $\bY\in\RR^{N\times n}$ be two random matrices with zero mean and generalized covariance and cross-covariances as $\bx$ and $\by$, i.e., $\mathbf{\tilde{K}}_{\bX\bX}=\bK_{\bx\bx}$, $\mathbf{\tilde{K}}_{\bY\bY}=\bK_{\by\by}$ and $\mathbf{\tilde{K}}_{\bX\bY}=\bK_{\bx\by}$. Then, for any filter $\bB\in\RR^{M\times N}$
\begin{align}
\mathbf{\tilde{K}}_{\bE\bE}=\bK_{\be\be},\nonumber
\end{align}
where $\bE=\bB\bY-\bX$ and $\be=\bB\by-\bx$. In particular, the linear MMSE estimator for $x_m$ from $\by$ also minimizes the effective variance of the estimation error vector of $\bx_m$ from $\bY$ for all $m=1,\ldots,M$.
\end{lemma}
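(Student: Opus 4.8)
The plan is to recognize $\bE=\bB\bY-\bX$ as the image of a single stacked random matrix under a fixed deterministic linear map, so that Proposition~\ref{prop:gencovcalc} applies verbatim. Concretely, I would form the $(N+M)\times n$ matrix $\bU\triangleq[\bY^T\ \bX^T]^T$, whose rows are the rows of $\bY$ followed by the rows of $\bX$. Directly from Definition~\ref{def:gencov}, its generalized covariance matrix has the block form
\begin{align}
\mathbf{\tilde{K}}_{\bU\bU}=\begin{pmatrix}\mathbf{\tilde{K}}_{\bY\bY} & \mathbf{\tilde{K}}_{\bY\bX}\\ \mathbf{\tilde{K}}_{\bX\bY} & \mathbf{\tilde{K}}_{\bX\bX}\end{pmatrix},\nonumber
\end{align}
and the classical covariance of the analogously stacked vector $[\by^T\ \bx^T]^T$ has the same block form with $\bK_{\by\by},\bK_{\by\bx},\bK_{\bx\by},\bK_{\bx\bx}$. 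Writing $\bE=[\bB\ {-\bI}]\,\bU$ and applying the ``in particular'' part of Proposition~\ref{prop:gencovcalc} with $\bG=[\bB\ {-\bI}]$ gives
\begin{align}
\mathbf{\tilde{K}}_{\bE\bE}=[\bB\ {-\bI}]\,\mathbf{\tilde{K}}_{\bU\bU}\,[\bB\ {-\bI}]^T,\nonumber
\end{align}
and the identical computation expresses $\bK_{\be\be}$ as the same quadratic form evaluated on the classical block covariance. Since the hypotheses $\mathbf{\tilde{K}}_{\bX\bX}=\bK_{\bx\bx}$, $\mathbf{\tilde{K}}_{\bY\bY}=\bK_{\by\by}$, $\mathbf{\tilde{K}}_{\bX\bY}=\bK_{\bx\by}$ (and hence, by transposing, $\mathbf{\tilde{K}}_{\bY\bX}=\bK_{\by\bx}$) make the two block matrices equal, the two quadratic forms in $\bB$ agree, i.e.\ $\mathbf{\tilde{K}}_{\bE\bE}=\bK_{\be\be}$ for every filter $\bB$. (Equivalently, one could expand $\mathbf{\tilde{K}}_{\bE\bE}$ term by term using bilinearity of the generalized covariance together with Proposition~\ref{prop:gencovcalc}; the stacking just packages this cleanly.)

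For the ``in particular'' assertion, I would note that the $m$th diagonal entry of $\mathbf{\tilde{K}}_{\bE\bE}$ is, by Definition~\ref{def:gencov}, exactly the effective variance of the $m$th row of $\bE$, while the $m$th diagonal entry of $\bK_{\be\be}$ is $\mathbb{E}(e_m^2)$; by the first part these two functions of $\bB$ coincide for each $m$. As recalled in Section~\ref{subsec:LMMSE}, the classical linear MMSE filter $\bB^*=\bK_{\bx\by}\bK_{\by\by}^{-1}$ is a \emph{simultaneous} minimizer of $\mathbb{E}(e_m^2)$ over all $m=1,\dots,M$ (it minimizes each component's error variance, not merely the trace). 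Transporting this optimality through the identity $\mathbf{\tilde{K}}_{\bE\bE}=\bK_{\be\be}$ shows that the same $\bB^*$ minimizes the effective variance of every row of $\bE$, i.e.\ of $\bz_{\text{eff},m}$ for all $m$.

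I do not expect a genuine obstacle: the lemma is essentially bookkeeping once Proposition~\ref{prop:gencovcalc} is available. The two points that merit care are (i) checking that the generalized covariance of the stacked matrix $\bU$ really has the claimed block structure, which is immediate from Definition~\ref{def:gencov} since the rows of $\bU$ are literally the rows of $\bY$ and $\bX$; and (ii) making explicit that the classical MMSE filter is optimal \emph{componentwise}, so that a single $\bB^*$ is optimal for all $M$ effective sub-channels of the successive IF receiver, rather than only in an aggregate sense.
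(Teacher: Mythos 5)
Your argument is exactly the one the paper intends: its proof of this lemma is the single line ``Follows immediately from Definition~\ref{def:gencov} and Proposition~\ref{prop:gencovcalc},'' and your stacking of $\bX$ and $\bY$ with $\bG=[\bB\ \ -\bI]$ is the natural way to make that immediacy explicit, including the componentwise optimality of the MMSE filter needed for the ``in particular'' clause. The proposal is correct and takes the same route as the paper.
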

\begin{proof}
Follows immediately from Definition~\ref{def:gencov} and Proposition~\ref{prop:gencovcalc}.
\end{proof}
\section{Successive Integer-Forcing}
\label{sec:SIF}

We now describe and analyze the successive integer-forcing receiver, which combines ideas from classical successive interference cancellation and integer-forcing. At a high level, the goal of the successive IF receiver is the same as that of the IF receiver: first recover a set of integer-linear combinations described the coefficient matrix $\mathbf{A} \in \mathbb{Z}^{M \times M}$ and then solve for the desired messages. However, rather than decoding these integer-linear combinations in parallel, the successive IF receiver decodes them one at a time and uses decoded combinations to reduce the effective noise encountered in subsequent decoding steps (as in SIC).

The receiver begins by performing linear MMSE estimation of $\bA\bX$ from $\bY$, adding back the dithers and reducing the result modulo the coarse lattice, just as in standard IF. The resulting effective channel is given by~\eqref{eqoutput},~\eqref{modeq}, and~\eqref{Bze} where $\bB$ is chosen as
\begin{align}
\bB=\bA{\bH}^T\left(\frac{1}{\Tsnr}\bI+{\bH}{\bH}^T\right)^{-1}.\nonumber
\end{align}
The resulting generalized covariance matrix of $\bZ_{\text{eff}}$ is
\begin{align}
\mathbf{\tilde{K}}_{\bZ_{\text{eff}}\bZ_{\text{eff}}}=\Tsnr \  \bA(\bI+\Tsnr\ \bH^T\bH)^{-1}\bA^T.\nonumber
\end{align}
For a full-rank matrix $\bA\in\ZZ^{M\times M}$, the matrix $\bA(\bI+\Tsnr \ \bH^T\bH)^{-1}\bA^T$ admits a Cholesky decomposition
\begin{align}
\bA\left(\bI+\Tsnr \ \bH^T\bH\right)^{-1}\bA^T=\bL\bL^T,\label{cholnp}
\end{align}
where $\bL\in\RR^{M\times M}$ is a lower triangular matrix with strictly positive diagonal entries. Let $$\bW=\frac{1}{\sqrt{\Tsnr}}\bL^{-1}\bZ_{\text{eff}}$$ and note that $\mathbf{\tilde{K}}_{\bW\bW}=\bI$, by Proposition~\ref{prop:gencovcalc}. Now, the effective channel is
\begin{align}
\bY_{\text{eff}}=\left[\bV+\sqrt{\Tsnr}\ \bL\bW\right]\Mod_c,\label{zwhitened}
\end{align}
where $\bW$ is statistically independent of $\bV$ as it is a deterministic function of $\bZ_{\text{eff}}$ which is statistically independent of $\bV$.

As in Section~\ref{subsec:SICnp}, the receiver successively decodes the integer-linear combinations one-by-one. After decoding the $m$th combinations $\bv_m$, it recovers $\bw_m$ and cancels its contribution to the effective noises that corrupt equations that are yet to be decoded. Specifically, the receiver begins by decoding $\bv_1$ from $\by_{\text{eff},1}$. This can be done reliably if
\begin{align}
R<\frac{1}{2}\log\left(\frac{\Tsnr}{\Tsnr \ \ell_{11}^2}\right)=-\frac{1}{2}\log(\ell_{11}^2).\label{ratev1}
\end{align}
Assuming $\bv_1$ was decoded correctly, the receiver next computes
\begin{align}
\mathbf{\hat{w}}_1&=\frac{1}{\sqrt{\Tsnr}\ell_{11}}\left[\by_{\text{eff},1}-\bv_1 \right]\Mod_c\nonumber\\
&=\frac{1}{\sqrt{\Tsnr}\ell_{11}}\left[\sqrt{\Tsnr}\ \ell_{11}\bw_1 \right]\Mod_c\nonumber\\
&\stackrel{w.h.p.}{=}\bw_1,\label{w1hat}
\end{align}
where~\eqref{w1hat} follows from the fact that for a ``good'' nested lattice codebook~\eqref{ratev1} implies that $\sqrt{\Tsnr}\ell_{11}\bw_1\in\CV_f\subset\CV_c$ with high probability. See Section~\ref{subsec:correctzeff} for a comprehensive discussion on this assumption. The receiver then uses $\bw_1$ to produce a less noisy channel
\begin{align}
\bY_{\text{eff}}^{(2)}&=\left[\bY_{\text{eff}}-\sqrt{\Tsnr}\ \mathbf{l}_1\bw_1\right]\Mod_c\nonumber\\
&=\left[\bV+\sqrt{\Tsnr}\ \bL^{(2)}\bW\right]\Mod_c,\nonumber
\end{align}
where $\mathbf{l}_1$ is the first column of $\bL$ and \mbox{$\bL^{(2)}=\bL-\mathbf{l}_1$} is the matrix $\bL$ with its first column nulled out. Now, the receiver can decode $\bv_2$ from $\by_{\text{eff},2}^{(2)}$, the second row of $\bY_{\text{eff}}^{(2)}$, as long as
\begin{align}
R<-\frac{1}{2}\log\left(\ell_{22}^2\right).\nonumber
\end{align}
Continuing in the same manner, it follows that all equations can be decoded reliably as long as
\begin{align}
R<-\frac{1}{2}\log\left(\max_{m=1,\ldots,M}\ell_{mm}^2\right).\label{MACrate}
\end{align}
This is summarized in the following theorem.

\vspace{1mm}

\begin{theorem}
\label{thm:SIC}
There exist nested lattice codebooks such that, for any full-rank matrix $\bA\in\ZZ^{M\times M}$, successive IF can achieve any rate satisfying
\begin{align}
R_{\text{S-IF}}<-\frac{M}{2}\log\left(\max_{m=1,\ldots,M}\ell_{mm}^2\right),\nonumber
\end{align}
where $\ell_{mm}$ are the diagonal entries of $\bL$ from~\eqref{cholnp}.
\end{theorem}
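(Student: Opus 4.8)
The plan is to turn the noise-prediction argument sketched in the paragraphs preceding the theorem into a formal achievability proof. First I would fix the effective channel: exactly as in standard IF, choosing $\bB=\bA\bH^T(\tfrac{1}{\Tsnr}\bI+\bH\bH^T)^{-1}$, adding back $\bA\bD$, and reducing modulo $\Lambda_c$ produces $\bY_{\text{eff}}=[\bV+\bZ_{\text{eff}}]\Mod_c$ with $\bV$ as in \eqref{modeq} (each row a codeword of $\mathcal{C}$) and $\bZ_{\text{eff}}$ as in \eqref{Bze}, which is statistically independent of $\bV$ because the dithers render $\bX$ independent of $\bT$ (Crypto Lemma). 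Since $\bZ_{\text{eff}}=\bB\bY-\bA\bX$ is the estimation error of $\bA\bX$ from $\bY$ under the linear MMSE filter, Proposition~\ref{prop:gencovcalc}, Lemma~\ref{lem:MMSE}, and Woodbury's identity give $\mathbf{\tilde{K}}_{\bZ_{\text{eff}}\bZ_{\text{eff}}}=\Tsnr\,\bA(\bI+\Tsnr\,\bH^T\bH)^{-1}\bA^T$. As $\bA$ is full rank and $(\bI+\Tsnr\,\bH^T\bH)^{-1}$ is symmetric positive definite, so is this matrix, and it admits the unique Cholesky factorization \eqref{cholnp} with $\bL$ lower triangular and $\ell_{mm}>0$. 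Putting $\bW=\tfrac{1}{\sqrt{\Tsnr}}\bL^{-1}\bZ_{\text{eff}}$, Proposition~\ref{prop:gencovcalc} yields $\mathbf{\tilde{K}}_{\bW\bW}=\bI$, and $\bW$ remains independent of $\bV$, so the channel takes the whitened form \eqref{zwhitened}.

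The second step is the successive decoder, analyzed by induction on $m=1,\ldots,M$. The structural fact that drives everything is that $\bL$ is lower triangular: once $\bw_1,\ldots,\bw_{m-1}$ have been recovered exactly and their contributions removed modulo $\Lambda_c$, the $m$th row of $\bL^{(m)}\bW$ equals $\ell_{mm}\bw_m$, since $\ell_{m,m+1}=\cdots=\ell_{mM}=0$. Thus at step $m$ the receiver faces the point-to-point modulo-additive sub-channel $\by_{\text{eff},m}^{(m)}=[\bv_m+\sqrt{\Tsnr}\,\ell_{mm}\bw_m]\Mod_c$, whose effective noise is a linear combination of dithered codeword rows and Gaussian rows (exactly the type of noise treated in Section~\ref{subsec:IF}) with per-dimension second moment $\Tsnr\,\ell_{mm}^2$ (using $\tfrac1n\mathbb{E}\|\bw_m\|^2=1$). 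Applying the nested-lattice decoding statement recalled in Section~\ref{subsec:IF} (from \cite{ng11IT,ez04}) to this sub-channel, there is a good codebook $\mathcal{C}$ of rate $R<\tfrac12\log(\Tsnr/(\Tsnr\,\ell_{mm}^2))=-\tfrac12\log(\ell_{mm}^2)$ for which the event $\mathcal{S}_m=\{\sqrt{\Tsnr}\,\ell_{mm}\bw_m\in\CV_f\}$ has probability approaching $1$, and on $\mathcal{S}_m$ the combination $\bv_m$ is decoded correctly and, since $\CV_f\subset\CV_c$, the modulo-$\Lambda_c$ reduction in \eqref{w1hat} acts as the identity, so $\hat{\bw}_m=\bw_m$ and the receiver can form $\bY_{\text{eff}}^{(m+1)}$ and continue. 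A union bound over $m=1,\ldots,M$ shows that if $R<-\tfrac12\log(\max_{m}\ell_{mm}^2)$ then $\bV$, and hence $\bT$ (recovered by solving $\bV=[\bA\bT]\Mod_c$ exactly as in standard IF \cite{zneg12IT}), are obtained with vanishing error probability. Since each of the $M$ streams is at rate $R$, the sum rate $R_{\text{S-IF}}=MR<-\tfrac{M}{2}\log(\max_{m}\ell_{mm}^2)$ is achievable.

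Two points require the most care. First, one needs a \emph{single} nested lattice pair $\Lambda_c\subset\Lambda_f$ that is simultaneously good for all $M$ sub-channels, whose effective noise variances $\Tsnr\,\ell_{mm}^2$ differ; this is standard and follows from the existence of nested lattice chains simultaneously good for any fixed finite collection of noise levels, as used in \cite{ng11IT,oe13}. Second, and this is the step I expect to be the main obstacle to state cleanly, is the control of error propagation: one must argue that ``$\bv_m$ decoded correctly'' and ``$\bw_m$ recovered \emph{exactly}'' hold together on a single high-probability event $\mathcal{S}_m$, so that the cancellation producing $\bY_{\text{eff}}^{(m+1)}$ introduces neither residual error nor bias and the $M$ per-step error events can simply be union-bounded. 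This is precisely the assumption flagged below \eqref{w1hat}; I would discharge it via the analysis promised in Section~\ref{subsec:correctzeff}, namely that for an AWGN-good fine lattice the event on which $\bv_m$ is decoded correctly contains $\mathcal{S}_m=\{\sqrt{\Tsnr}\,\ell_{mm}\bw_m\in\CV_f\}$, which also forces $\hat{\bw}_m=\bw_m$ because $\CV_f\subset\CV_c$.
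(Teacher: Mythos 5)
Your proposal follows the paper's own derivation essentially step for step: the MMSE filter for $\bA\bX$, the generalized-covariance computation and Cholesky whitening via \eqref{cholnp}, the successive decoding of $\bv_m$ over the sub-channel with effective variance $\Tsnr\,\ell_{mm}^2$, and the exact recovery of $\bw_m$ justified by $\sqrt{\Tsnr}\,\ell_{mm}\bw_m\in\CV_f\subset\CV_c$ w.h.p., which is precisely the point the paper defers to Section~\ref{subsec:correctzeff}. The only additions are routine formalizations (the union bound over $m$ and the simultaneous goodness of a single nested lattice pair), so this is the same proof.
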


\vspace{1mm}

Note that in the described scheme each antenna transmits an independent stream. Thus, the same coding scheme can be applied over an $M$-user Gaussian MAC, where the transmit antennas are distributed. Hence, for a Gaussian $M$-user MAC, with the $m$th column of $\bH$ representing the coefficients from the $m$th user to the receiver, each user can achieve any rate satisfying~\eqref{MACrate}.

\subsection{Sum Rate Optimality of Successive IF}
\label{subsec:sumrateopt}

In this subsection, we consider using a \emph{chain} of nested lattice codebooks. Specifically, $M$ nested lattice codebooks $\mathcal{C}_m=\Lambda_{f_m}\cap\CV_c$, $m=1,\ldots,M$, are constructed from the lattice chain~\eqref{nestedchain}. Note that $\mathcal{C}_M\subseteq\cdots\subseteq\mathcal{C}_1$
by construction, and the associated rates satisfy $R_M\leq\cdots\leq R_1$. Each of the $M$ streams is encoded by one of these codebooks. We show that with such a chain of nested lattice codebooks, successive IF can achieve $\Cwi$ if the transmitter judiciously allocates the rates to the different streams, and if the diagonal entries of $\bL$ from~\eqref{cholnp} are monotonically increasing.

The main idea is that each integer-linear combination $\bv_m$ can be decoded reliably if it is taken from a good nested lattice codebook of rate smaller than $-\nicefrac{1}{2}\log(\ell_{mm}^2)$. Thus, if we could ensure that $\bv_m$ belongs to the codebook $\mathcal{C}_m$, for all $m=1,\ldots,M$, we could just choose the rates of the $M$ nested codebooks to satisfy $R_m<-\nicefrac{1}{2}\log(\ell_{mm}^2)$.


However, $\bv_m=[\sum_{k=1}^M a_{mk}\bt_k]\Mod_c$ belongs to the densest lattice codebook from which the codewords $\bt_1,\ldots,\bt_M$ are taken. This obstacle can be overcome by using the equations that were already decoded not only for estimating the noises corrupting the remaining equations, but also for \emph{reducing the rates} of the remaining equations. This is essentially done by adding integer multiples of decoded equations to the remaining ones in a way that nulls out the effect of some of the lattice points $\bt_k$ participating in these equations.

The idea of using decoded equations for reducing the rates of the remaining ones was proposed and analyzed in~\cite[Section IV.B]{CoFTransformFull}. For sake of brevity, we do not repeat the details, and only briefly illustrate the idea by a simple example. Assume that the number of transmit antennas is $M=2$ and $\mathcal{C}_2\subset\mathcal{C}_1$ are two nested linear codes over the prime field $\ZZ_p$ with rates $R_2\leq R_1$. Two nested lattice codebooks are constructed by mapping $\mathcal{C}_1$ and $\mathcal{C}_2$ to a $p$-PAM constellation, and the coarse lattice, in this case, is taken as $\Lambda_c=p\ZZ^n$. The first antenna transmits a codeword $\bt_1\in\mathcal{C}_1$ and the second transmits $\bt_2\in\mathcal{C}_2$. The effective channel~\eqref{eqoutput} is
\begin{align}
\by_{\text{eff},1}&=\left[a_{11}\bt_1+a_{12}\bt_2+\bz_{\text{eff},1}\right] \hspace{-0.13in}\mod p\nonumber\\
\by_{\text{eff},2}&=\left[a_{21}\bt_1+a_{22}\bt_2+\bz_{\text{eff},2}\right] \hspace{-0.13in}\mod p\nonumber,
\end{align}
where we assume w.l.o.g. that $\sigma^2_{\text{eff},1}\leq\sigma^2_{\text{eff},2}$. The first equation $\bv_1=[a_{11}\bt_1+a_{12}\bt_2] \hspace{-0.08in}\mod p$ is a codeword in $\mathcal{C}_1$ and can be decoded if $R_1$ is small enough w.r.t. $\sigma^2_{\text{eff},1}$. After decoding $\bv_1$ the receiver can scale it by $a_{21}a_{11}^{-1}$, where the inversion is over the field $\ZZ_p$, and subtract it from $\by_{\text{eff},2}$ to obtain
\begin{align}
\tilde{\by}_{\text{eff},2}&=\left[a_{21}\bt_1+a_{22}\bt_2-a_{21}a_{11}^{-1}\bv_1+\bz_{\text{eff},2}\right]\hspace{-0.13in}\mod p\nonumber\\
&=\left[(a_{22}-a_{21}a_{11}^{-1}a_{12})\bt_2+\bz_{\text{eff},2}\right]\hspace{-0.13in}\mod p\nonumber.
\end{align}
Now, ${\bv}_2^{(2)}\triangleq[(a_{22}-a_{21}a_{11}^{-1}a_{12})\bt_2] \hspace{-0.08in}\mod p$ is in $\mathcal{C}_2$ and it suffices that $R_2$ is small enough w.r.t. $\sigma^2_{\text{eff},2}$ to ensure correct decoding. Thus, the described procedure enables to ``allocate'' different rates to the different equations.

It was shown in~\cite{CoFTransformFull} that for any full-rank $\bA\in\ZZ^{M\times M}$ such procedure can always ensure that $\bv_m\in\mathcal{C}_m$ for all $m$ for at least one mapping between codebooks and transmit antennas. Here, we combine this ingredient with the idea of using the decoded equations also for performing noise prediction. Namely, in the $m$th successive decoding step we compute
\begin{align}
\bY_{\text{eff}}^{(m)}&=\left[\bY_{\text{eff}}-\sum_{k=1}^{m-1}\left(\sqrt{\Tsnr} \ \mathbf{l}_k\bw_k-\bq_k\bv_k\right)\right]\Mod_c\nonumber\\
&=\left[\bV^{(m)}+\sqrt{\Tsnr}\bL^{(m)}\bW\right]\Mod_c,\nonumber
\end{align}
where the role of the column vectors $\{\mathbf{l}_k\}$ is to perform noise prediction, as before, and the role of the integer-valued column vectors $\{\bq_k\}$ is to reduce the number of lattice points participating in the remaining equations, such that only lattice points from $\mathcal{C}_m,\ldots,\mathcal{C}_M$ participate in $\bV^{(m)}$.

When doing so, however, one new issue arises. Reducing the rate of remaining equations using decoded ones is advantageous if the effective variances $\sigma^2_{\text{eff},1},\ldots,\sigma^2_{\text{eff},M}$ are monotonically increasing, such that the achievable computation rates are monotonically decreasing. Without noise prediction, one can always choose the decoding order such that this is satisfied, i.e., start with the best equation, then decode the second best and so on. When noise prediction is also applied, a situation that may occur is that after decoding the best equation and using it to predict the effective noise for the second equation, the second computation rate becomes higher than the first. This occurs if the diagonal entries of $\bL$, corresponding to the effective variances of the prediction errors, are not monotonically increasing. In this case, using decoded equations for reducing the rates of the remaining ones is less effective.

When the diagonal entries of $\bL$ are monotonically increasing, the described scheme can achieve any sum-rate satisfying
\begin{align}
\sum_{m=1}^M R_m&=-\frac{1}{2}\sum_{m=1}^M \log\left(\ell_{mm}^2\right)\nonumber\\
&=-\frac{1}{2}\log\left(\prod_{m=1}^M\ell_{mm}^2 \right)\nonumber\\
&=-\frac{1}{2}\log\det\left(\bL\bL^T\right)\nonumber\\
&=-\frac{1}{2}\log\det\left(\bA\left(\bI+\Tsnr\bH^T\bH\right)^{-1}\bA^T\right)\nonumber\\
&=\Cwi-\log|\det(\bA)|.\nonumber
\end{align}

The following definition from~\cite{CoFTransformFull} is needed for formally characterizing the performance of the described scheme.
\begin{definition}
For a full-rank $M\times M$ matrix $\bA$ with integer-valued entries we define the \emph{pseudo-triangularization} process,
which transforms the matrix $\bA$ to a matrix $\mathbf{\tilde{A}}$ which is upper triangular up to column permutation $\mathbf{\pi}=\left[\pi(1) \ \pi(2) \ \cdots  \ \pi(M)\right]$. This is accomplished by left-multiplying $\bA$ by a lower triangular matrix $\bR$ with unit diagonal, such that $\mathbf{\tilde{A}}=\bR\bA$ is upper triangular up to column permutation $\mathbf{\pi}$. Although the matrix $\bA$ is integer valued, the matrices $\bR$ and $\mathbf{\tilde{A}}$ need not necessarily be integer valued.
Note that the pseudo-triangularization process is reminiscent of Gaussian elimination except that row switching and row multiplication are prohibited.
\end{definition}

\begin{remark}
Any full-rank matrix can be triangularized using the Gaussian elimination process, and therefore any full-rank matrix can be pseudo-triangularized with at least one permutation vector $\mathbf{\pi}$.
\end{remark}
\vspace{1mm}

\begin{theorem}
\label{thm:SICnested}
Let $\bA\in\ZZ^{M\times M}$ be a full-rank target integer-valued matrix that can be pseudo-triangularized with the permutation vector $\mathbf{\pi}$, and let $\bL$ be the lower-triangular matrix from~\eqref{cholnp}, whose diagonal entries are $\ell_{ii}$. If $\ell_{11}^2\leq\cdots\leq\ell_{MM}^2$, then there exists a chain of nested lattice codebooks $\mathcal{C}_M\subseteq\cdots\subseteq\mathcal{C}_1$ with rates $R_M\leq\cdots\leq R_1$ such that if each $m$th antenna encodes its stream using the codebook $\mathcal{C}_{\mathbf{\pi}^{-1}(m)}$ with rate $R_{\mathbf{\pi}^{-1}(m)}$ and
\begin{align}
R_{\mathbf{\pi}^{-1}(m)}<-\frac{1}{2}\log(\ell_{\mathbf{\pi}^{-1}(m)\mathbf{\pi}^{-1}(m)}^2), \ \forall m=1,\ldots,M\nonumber
\end{align}
all streams can be decoded with a vanishing error probability using the successive integer-forcing receiver. In other words, all streams can be decoded if the rate of each $m$th stream is smaller than $-\frac{1}{2}\log(\ell_{\mathbf{\pi}^{-1}(m)\mathbf{\pi}^{-1}(m)}^2)$. Consequently, the successive integer-forcing receiver can achieve any rate satisfying
\begin{align}
R_{\text{S-IF}}<\Cwi-\log|\det(\bA)|.\nonumber
\end{align}
\end{theorem}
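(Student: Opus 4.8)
The plan is to combine three ingredients: the noise-prediction successive decoder of Section~\ref{subsec:SICnp} (now applied to the estimation of $\bA\bX$ from $\bY$ rather than of $\bX$), the existence of good nested lattice \emph{chains} realizing a prescribed non-increasing rate sequence, and the rate-reduction procedure of~\cite[Section IV.B]{CoFTransformFull}, which uses already-decoded equations to strip lattice points out of the equations that remain. The whole argument is an induction on the decoding step, with the two ``uses'' of each decoded equation (noise prediction on the Gaussian part, rate reduction on the lattice part) running in parallel.

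First I would fix the codebook chain. Because $\ell_{11}^2\le\cdots\le\ell_{MM}^2$, the bounds $-\frac12\log(\ell_{ii}^2)$ are non-increasing in $i$, so one can pick rates $R_i<-\frac12\log(\ell_{ii}^2)$ that satisfy $R_M\le\cdots\le R_1$, and then invoke the existence of a chain of good nested lattices $\Lambda_c\subseteq\Lambda_{f_M}\subseteq\cdots\subseteq\Lambda_{f_1}$ with $\sigma^2(\Lambda_c)=\Tsnr$ realizing these rates; this produces nested codebooks $\mathcal{C}_M\subseteq\cdots\subseteq\mathcal{C}_1$. (This is the only place the monotonicity hypothesis is used.) Next I would pseudo-triangularize $\bA$ as $\mathbf{\tilde{A}}=\bR\bA$, with $\bR$ lower-triangular with unit diagonal and $\mathbf{\tilde{A}}$ upper-triangular up to the column permutation $\mathbf{\pi}$, and assign codebook $\mathcal{C}_{\mathbf{\pi}^{-1}(m)}$ to antenna $m$ — equivalently, the $i$th ``pivot'' stream $\bt_{\pi(i)}$ uses $\mathcal{C}_i$. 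The receiver forms $\bY_{\text{eff}}=\left[\bB\bY+\bA\bD\right]\Mod_c$ with $\bB=\bA\bH^T\left(\frac{1}{\Tsnr}\bI+\bH\bH^T\right)^{-1}$, obtaining $\left[\bV+\sqrt{\Tsnr}\,\bL\bW\right]\Mod_c$ with $\mathbf{\tilde{K}}_{\bW\bW}=\bI$, exactly as in~\eqref{zwhitened}.

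The core is the inductive step. Given $\bv_1,\dots,\bv_{m-1}$ and the corresponding $\bw_1,\dots,\bw_{m-1}$ (recovered as in~\eqref{w1hat}), the receiver forms the update $\bY_{\text{eff}}^{(m)}$ displayed just before the theorem statement: the real column vectors $\bl_k$ perform noise prediction as in Section~\ref{subsec:SICnp}, leaving residual noise $\sqrt{\Tsnr}\,\bL^{(m)}\bW$ with the first $m-1$ columns of $\bL$ nulled, while the integer column vectors $\bq_k$ are chosen, following~\cite[Section IV.B]{CoFTransformFull}, so that the $m$th row of $\bV^{(m)}$ equals $\left[\mathbf{\tilde{a}}_m\bT\right]\Mod_c$, where $\mathbf{\tilde{a}}_m$ is the $m$th row of $\mathbf{\tilde{A}}$. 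Since $\mathbf{\tilde{A}}$ is upper-triangular up to $\mathbf{\pi}$, the support of $\mathbf{\tilde{a}}_m$ lies in $\{\pi(m),\dots,\pi(M)\}$, whose streams are drawn from $\mathcal{C}_m,\dots,\mathcal{C}_M\subseteq\mathcal{C}_m$, so by the construction of~\cite{CoFTransformFull} the $m$th row of $\bV^{(m)}$ is a codeword of $\mathcal{C}_m$, of rate $R_m$. Its effective noise in the $m$th sub-channel is $\sqrt{\Tsnr}\,\ell_{mm}\bw_m$, of effective variance $\Tsnr\,\ell_{mm}^2$, hence by the single-user nested-lattice decoding guarantee it is recovered whenever $R_m<\frac12\log\!\left(\Tsnr/(\Tsnr\ell_{mm}^2)\right)=-\frac12\log(\ell_{mm}^2)$; then $\bw_m$ is obtained from $\left[\by_{\text{eff},m}^{(m)}-\bv_m^{(m)}\right]\Mod_c$ as in~\eqref{w1hat}, using that $\sqrt{\Tsnr}\,\ell_{mm}\bw_m\in\CV_c$ with high probability for a good chain (Section~\ref{subsec:correctzeff}). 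Iterating $m=1,\dots,M$ and union-bounding over the $M$ steps gives vanishing error probability; inverting the relation between $\{\bv_m^{(m)}\}$ and $\{\bt_m\}$ recovers all streams, and summing the per-stream bounds yields
\begin{align}
\sum_{m=1}^{M}R_m &< -\frac{1}{2}\sum_{m=1}^{M}\log(\ell_{mm}^2)=-\frac{1}{2}\log\det\left(\bL\bL^T\right)\nonumber\\
&=-\frac{1}{2}\log\det\left(\bA(\bI+\Tsnr\bH^T\bH)^{-1}\bA^T\right)=\Cwi-\log|\det(\bA)|.\nonumber
\end{align}

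The hard part will be verifying that the two uses of the decoded equations coexist cleanly at every step. The prediction terms $\sqrt{\Tsnr}\,\bl_k\bw_k$ carry real coefficients and perturb only the Gaussian part, whereas the reduction terms $\bq_k\bv_k$ carry integer coefficients and perturb only the lattice part; one must check that reducing modulo $\Lambda_c$ at each step neither corrupts the codeword structure of the $m$th row of $\bV^{(m)}$ — which holds because $\bq_k\bv_k$ is lattice-valued and the nested codebooks are closed under integer combinations modulo $\Lambda_c$ — nor changes its effective noise away from $\sqrt{\Tsnr}\,\ell_{mm}\bw_m$, which follows from the single-codebook computation of Section~\ref{subsec:SICnp}. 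Granting the rate-reduction construction of~\cite{CoFTransformFull} and the recoverability of the $\bw_k$ deferred to Section~\ref{subsec:correctzeff}, the remaining steps are the bookkeeping above.
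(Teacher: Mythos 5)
Your proposal follows essentially the same route as the paper: the Cholesky/noise-prediction decomposition of Section~\ref{subsec:SICnp} applied to $\bA\bX$, the rate-reduction of already-decoded equations via~\cite[Section IV.B]{CoFTransformFull} so that the $m$th surviving combination involves only streams $\pi(m),\ldots,\pi(M)$ and hence lies in $\mathcal{C}_m$, the use of the monotonicity hypothesis to make $R_m<-\frac12\log(\ell_{mm}^2)$ compatible with the nesting $R_M\le\cdots\le R_1$, and the telescoping $\sum_m\log\ell_{mm}^2=\log\det(\bL\bL^T)$ for the sum rate. The only slight imprecision is your identification of the $m$th reduced equation with $[\mathbf{\tilde{a}}_m\bT]\Mod_c$ — since $\mathbf{\tilde{A}}=\bR\bA$ need not be integer-valued while the $\bq_k$ must be, the actual elimination is carried out over the underlying finite field and only the support pattern of $\mathbf{\tilde{a}}_m$ is what matters, which is exactly how the paper (via~\cite{CoFTransformFull}) uses the pseudo-triangularization.
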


\vspace{3mm}

In the described scheme each antenna transmits an independent stream. Thus, Theorem~\ref{thm:SICnested} remains valid for a Gaussian MAC with the $m$th column of $\bH$ representing the coefficients from the $m$th user to the receiver. In this case, the theorem implies that, if the stated conditions on $\bA$ are staisfied, there exists a chain of nested lattice codebooks and a mapping $\mathbf{\pi}^{-1}:\{1,\ldots,M\}\rightarrow\{1,\ldots,M\}$ between users and codebooks such that the $m$th user can achieve any rate below $-\frac{1}{2}\log(\ell_{\mathbf{\pi}^{-1}(m)\mathbf{\pi}^{-1}(m)}^2)$ with a vanishing error probability using the successive integer-forcing receiver.
Since $\Cwi$ is the MAC's sum-capacity, under the conditions of Theorem~\ref{thm:SICnested}, successive IF (which is usually termed successive compute-and-forward for a MAC) achieves the sum-capacity if $\bA$ is unimodular.

\vspace{2mm}

\begin{remark}
Note that for the choice $\bA=\bI$, successive IF corresponds to standard SIC. In this case, the monotonicity condition on the diagonal entries of $\bL$ is not needed. This is due to the fact that for the choice $\bA=\bI$ only one lattice point participates in each ``linear combination''. Therefore, the procedure from~\cite[Section IV.B]{CoFTransformFull}, which induces this monotonicity condition, is not needed.
\end{remark}

\vspace{2mm}

\begin{example}
\label{ex:mac}
Consider the two-user Gaussian MAC
\begin{align}
\by=\sqrt{2}\bx_1+\bx_2+\bz,\nonumber
\end{align}
at $\Tsnr=15$dB. For the choice $\bA=\bI$ and its row permutation, successive IF reduces to standard SIC, and results in the achievable rate-regions
\begin{align}
R_1<0.7776 \frac{\text{bits}}{\text{channel use}}, \ \ R_2<2.5139 \frac{\text{bits}}{\text{channel use}},\nonumber
\end{align}
and
\begin{align}
R_1<3.0028 \frac{\text{bits}}{\text{channel use}}, \ \ R_2<0.2887 \frac{\text{bits}}{\text{channel use}},\nonumber
\end{align}
respectively. For the choice
\begin{align}
\bA=\left(
      \begin{array}{cc}
        1 & 1  \\
        3 & 2  \\
      \end{array}
    \right)\nonumber
\end{align}
we have $\ell_{11}^2<\ell_{22}^2$, as Theorem~\ref{thm:SICnested} requires, and $-\nicefrac{1}{2}\log(\ell_{11}^2)=1.8452$ and $-\nicefrac{1}{2}\log(\ell_{22}^2)=1.4463$. In addition, $\bA$ can be pseudo-triangularized with the permutation vectors $\pi_1=[1 \ 2]$ and $\pi_2=[2 \ 1]$. It therefore follows that the two rate-regions
\begin{align}
R_1<1.8452 \frac{\text{bits}}{\text{channel use}}, \ \ R_2<1.4463 \frac{\text{bits}}{\text{channel use}},\nonumber
\end{align}
and
\begin{align}
R_1<1.4463 \frac{\text{bits}}{\text{channel use}}, \ \ R_2<1.8452 \frac{\text{bits}}{\text{channel use}},\nonumber
\end{align}
are achievable with successive IF. In addition, since $|\det(\bA)|=1$, these points are sum-rate optimal. Figure~\ref{fig:MACcapacity} shows the capacity region of the MAC from this example, along with the rate region achieved by successive IF.
\end{example}

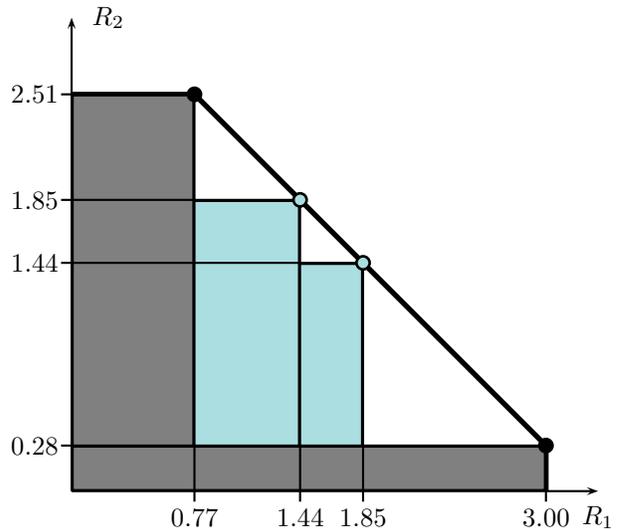
\begin{figure}[!htb]
\psset{unit=.7mm}
\begin{center}
\begin{pspicture}(-5,-5)(102,92)

\psline{->}(0,0)(100,0)
\psline{->}(0,0)(0,90)

\rput(7,90){$R_2$}
\psline{-}(-2,75.417)(2,75.417)
\rput(-7,75.417){$2.51$}
\psline{-}(-2,55.356)(2,55.356)
\rput(-7,55.356){$1.85$}
\psline{-}(-2,43.389)(2,43.389)
\rput(-7,43.389){$1.44$}
\psline{-}(-2,8.661)(2,8.661)
\rput(-7,8.661){$0.28$}

\rput(100,-5){$R_1$}
\psline{-}(90.084,-2)(90.084,2)
\rput(90.084,-5){$3.00$}
\psline{-}(55.356,-2)(55.356,2)
\rput(55.356,-5){$1.85$}
\psline{-}(43.389,-2)(43.389,2)
\rput(43.389,-5){$1.44$}
\psline{-}(23.328,-2)(23.328,2)
\rput(23.328,-5){$0.77$}

\psline[linewidth=2pt]{-}(0,75.417)(23.328,75.417)(90.084,8.661)(90.084,0)
\psframe[fillstyle=solid,fillcolor=paleblue](0,0)(55.356,43.389)
\psframe[fillstyle=solid,fillcolor=paleblue](0,0)(43.389,55.356)
\psframe[fillstyle=solid,fillcolor=gray](0,0)(23.328,75.417)
\psframe[fillstyle=solid,fillcolor=gray](0,0)(90.084,8.661)

\psline(0,75.417)(23.328,75.417)(23.328,0)
\psline(0,8.661)(90.084,8.661)(90.084,0)
\psline(0,43.389)(55.356,43.389)(55.356,0)
\psline(0,55.356)(43.389,55.356)(43.389,0)

\pscircle[linewidth=1pt,fillcolor=black,fillstyle=solid](23.328,75.417){1.5}
\pscircle[linewidth=1pt,fillcolor=black,fillstyle=solid](90.084,8.661){1.5}

\pscircle[linewidth=1pt,fillcolor=paleblue,fillstyle=solid](55.356,43.389){1.5}
\pscircle[linewidth=1pt,fillcolor=paleblue,fillstyle=solid](43.389,55.356){1.5}

\end{pspicture}
\end{center}
\caption{The capacity region of the MAC from Example~\ref{ex:mac}, and the rate region achieved by successive IF. The gray (dark shaded) area in the figure corresponds to the rate-region achievable by standard SIC, whereas the pale blue (bright shaded) area is the additional rate-region obtained by successive IF. Note that the plotted rate-region does not include time-sharing.}\label{fig:MACcapacity}
\end{figure}

\vspace{2mm}

\begin{remark}
An interesting conclusion from Theorem~\ref{thm:SICnested} is that if $\bA$ is such that the conditions of the theorem are satisfied, then the compute-and-forward decoder used in~\cite{ng11IT} achieves the same rates as the optimal maximum-likelihood (ML) decoder for decoding the integer-linear combinations whose coefficients are the rows of $\bA$. Recall that the decoder from~\cite{ng11IT} scales its observations, subtracts the dithers, quantizes to the fine lattice and reduces the result modulo the coarse lattice. This is in contrast to the ML decoder that computes the likelihood of each possible outcome for the desired integer-linear combination given the channel's output, and chooses the one that is most likely. The achievable rates described by Theorem~\ref{thm:SICnested} are attained using the same decoder as in~\cite{ng11IT} at each decoding step (after noise prediction was applied). It is shown that the obtained sum-rate equals the MAC's sum-capacity. If the ML decoder could attain higher rates, than using the noise prediction scheme described in this section, with the ML decoder instead of the one from~\cite{ng11IT}, higher rates could be obtained, contradicting the converse theorem for the MAC capacity region.
We emphasize that this only shows that the decoder of~\cite{ng11IT} achieves the highest possible rate for \emph{the described transmission scheme}, where each user transmits a dithered version of a lattice point taken from a chain of nested lattice codebook. Our results do not preclude the possibility that integer-linear combinations can be reliably decoded with a higher computation rate than that of~\cite{ng11IT} under \emph{a different transmission scheme}.
\end{remark}

\subsection{From  \hspace{-0.13in} $\mod \hspace{-0.05in}\mbox{-}\Lambda$ decoding to decoding over the reals}
\label{subsec:correctzeff}
Successive IF uses the values of the effective noise vectors $\bz_{\text{eff},1},\ldots,\bz_{\text{eff},m-1}$ in order to estimate $\bz_{\text{eff},m}$ and reduce its effective variance. However, correct decoding of the equation $\bv_k$ from $\by_{\text{eff},k}=[\bv_k+\bz_{\text{eff},k}]\Mod_c$ only ensures that the receiver has access to $[\bz_{\text{eff},k}]\Mod_c$, whereas for the successive IF scheme $\bz_{\text{eff},k}$ is needed.
Correct decoding of $\bv_k$ from $\by_{\text{eff},k}$ only implies that $[\bz_{\text{eff},k}]\Mod_c\in\CV_f$, but does not necessarily imply that $\bz_{\text{eff},k}\in\CV_f$. If the fine lattice $\Lambda_f$ used for constructing $\mathcal{C}$ is Poltyrev-good and $\Lambda_C$ is Rogers-good~\cite{ez04}, we have
\begin{align}
\Pr\left(\bz_{\text{eff},k}\notin\CV_f\right)<2^{-n\left(\frac{1}{2}\log\left(\frac{\Tsnr}{\sigma^2_{\text{eff},k}}\right)-R-o_1(n)\right) },\label{poltyrev}
\end{align}
and since $\CV_f\subset\CV_c$ the probability that $[\bz_{\text{eff},k}]\Mod_c\neq\bz_{\text{eff},k}$ can be made arbitrarily small by increasing the block length $n$. In~\eqref{w1hat} we assumed that $\Lambda_f$ is Poltyrev-good and $\Lambda_c$ is Rogers-good to obtain the relation $[\bz_{\text{eff},k}]\Mod_c\stackrel{w.h.p.}{=}\bz_{\text{eff},k}$.

In practice, however, a ``good'' nested lattice codebook is hard to implement, and suboptimal nested lattice codebooks are constructed. A commonly used construction for a nested lattice codebook is one where the fine lattice is built from a linear code of block length $n$ over a prime field $\ZZ_p$ with moderate cardinality (e.g., an LDPC code or a turbo code) using Construction A~\cite{cs88,loeliger97}, and the coarse lattice is the scaled integer-lattice $p\ZZ^n$. In this case, although $\Pr([\bz_{\text{eff},k}]\Mod_c\notin\CV_f)$ can be made as small as desired for $n$ large enough and an appropriate choice of $R$, $\Pr([\bz_{\text{eff},k}]\Mod_c\neq\bz_{\text{eff},k})$ cannot be made arbitrarily small, as $\Lambda_c$ is the scaled integer lattice whose Euclidean minimum distance does not increase with $n$. In other words, $\Lambda_f$ is not Poltyrev good. Thus, for construction A nested lattice codebooks, correct decoding of $\bv_k$ does not ensure correct decoding of $\bz_{\text{eff},k}$. This may degrade the performance of successive IF, as the predictions of subsequent effective noises may also be impaired.

Nevertheless, we claim that for moderate (not too small) rates of the codebook $\mathcal{C}$, this type of error will have a negligible effect on the total error probability. It can be shown that for a Construction A nested lattice codebook, if $R$ is such that $\bv_k$ can be decoded reliably from $\by_{\text{eff},k}$, then
\begin{align}
\Pr\left([\bz_{\text{eff},k}(i)]\hspace{-0.10in}\mod p\neq\bz_{\text{eff},k}(i)\right)<\exp\left\{-\frac{\pi e}{4}2^{2R} \right\}\nonumber
\end{align}
for each of the $n$ components of $\bz_{\text{eff},k}$. Thus, the expected number of components where $[\bz_{\text{eff},k}]\hspace{-0.08in}\mod p\neq\bz_{\text{eff},k}$ is fairly small for moderate $R$, and these erroneous components will not degrade the performance of successive IF by much.
\section{Finding the Optimal Integer-valued matrix $\bA$}
\label{sec:optA}
 Thus far, we have described the successive IF scheme for some predefined integer-valued matrix $\bA$. The performance of IF, as well as successive IF, critically depends on the choice of $\bA$ and we now discuss a procedure for finding its optimal value.

We would like to find the matrix $\bA$ that maximizes the computation rate for the worst equation when noise prediction is used. Mathematically, this problem can be formulated as
\begin{align}
\bA^{\text{opt}}_{\text{S-IF}}&=\argmin_{\substack{{\bA\in\ZZ^{M\times M}}\\ {\det(\bA)\neq 0}}} \ \max_{k=1,\ldots,M}\ell_{kk}^2,\label{SIFopt}
\end{align}
where $\ell_{kk}$ are the diagonal entries of the lower triangular matrix $\bL$ defined in~\eqref{cholnp}.
Note that this problem is different than the optimization problem for standard IF, which can be written as
\begin{align}
\bA^{\text{opt}}_{\text{IF}}&=\argmin_{\substack{{\bA\in\ZZ^{M\times M}}\\ {\det(\bA)\neq 0}}} \ \max_{k=1,\ldots,M}\sum_{i=1}^k\ell_{ik}^2.\label{IFopt}
\end{align}
As a result, the solution of~\eqref{SIFopt} may be different that that of~\eqref{IFopt}. We now show that for successive IF we can restrict $\bA$ to the class of unimodular matrices (matrices with integer entries and determinant $\pm 1$) without loss of generality, and its optimal value can be obtained using the Korkin-Zolotarev basis reduction procedure.

\vspace{1mm}

\begin{definition}[Korkin-Zolotarev basis~\cite{mg02}]
\label{def:KZ}
Let $\bF=[\bff_1 \ \cdots \ \bff_M]$ be a lattice basis of rank $M$, and let $\bF^*=[\bff_1^* \ \cdots \ \bff_M^*]$ be its corresponding Gram-Schmidt orthogonalization, i.e., $\bF=\bF^*\cdot\bR$ for some upper triangular matrix $\bR$ with unit diagonal. Define the projection functions $\mathcal{P}_i(\bx)=\sum_{j\geq i}(\bx^T \bff_j^* / \|\bff_j^*\|^2)\bff_j^*$ that project $\bx$ onto $\Span(\bff_i^*,\ldots,\bff_M^*)$. The basis $\bF$ is Korkin-Zolotarev (KZ) reduced if and only if for all $i=1,\ldots,M$
\begin{itemize}
\item $\bff^*_i$ is a shortest nonzero vector in $\mathcal{P}_i(\Lambda(\bF))$
\item for all $j>i$, the Gram-Schmidt coefficients $r_{j,i}=\bff_j^T\bff_i^*/\|\bff_i^*\|^2$ of $\bF$ satisfy $|r_{j,i}|\leq 1/2$.
\end{itemize}
\end{definition}

\vspace{1mm}

\begin{theorem}
\label{thm:KZ}
Let $\bG$ be defined as in~\eqref{cholnoA} and let $\bA$ be a unimodular matrix. If $\bG^T\bA^T$ is a KZ basis of the lattice $\Lambda(\bG^T)\triangleq\{\bG^T\bx \ : \  \bx\in\ZZ^M\}$ then $\bA$ is an optimal integer-valued matrix for successive IF.
\end{theorem}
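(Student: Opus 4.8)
The plan is to translate the rate objective in~\eqref{SIFopt} into a purely lattice-geometric quantity and then recognize that quantity as exactly what a KZ-reduced basis optimizes. First I would observe that the choice of $\bB$ in the successive IF scheme is the linear MMSE filter, so by construction the effective noise covariance is $\tilde{\bK}_{\bZ_{\text{eff}}\bZ_{\text{eff}}}=\Tsnr\cdot\bA(\bI+\Tsnr\,\bH^T\bH)^{-1}\bA^T=\Tsnr\,\bL\bL^T$, where $\bL$ is the Cholesky factor from~\eqref{cholnp}. Using~\eqref{cholnoA}, $(\bI+\Tsnr\,\bH^T\bH)^{-1}=\bG\bG^T$, so $\bA\bG\bG^T\bA^T=\bL\bL^T$. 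In other words $\bL$ is the (lower-triangular) Cholesky factor of $(\bA\bG)(\bA\bG)^T$. The key identification is: if we perform a QR-type decomposition $\bG^T\bA^T=\bQ\bR'$ with $\bQ$ orthogonal and $\bR'$ upper triangular, then $\bL=(\bR')^T$ up to sign conventions, and the diagonal entries $\ell_{mm}$ are precisely the Gram-Schmidt norms $\|\bff_m^*\|$ of the basis $\bff_m\triangleq\bG^T\bA^T\be_m$ of the lattice $\Lambda(\bG^T\bA^T)$. Since $\bA$ is unimodular, $\Lambda(\bG^T\bA^T)=\Lambda(\bG^T)$ as a set, so varying $\bA$ over unimodular matrices is the same as varying over all bases of the fixed lattice $\Lambda(\bG^T)$.

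Next I would justify the restriction to unimodular $\bA$. For general full-rank integer $\bA$, $\det(\bL\bL^T)=\det(\bA)^2\det(\bG\bG^T)$, so $\prod_m \ell_{mm}^2=|\det(\bA)|^2\prod_m g_{mm}^2$ where the $g_{mm}$ are the diagonal Cholesky entries for $\bA=\bI$. Since $\max_m \ell_{mm}^2 \ge (\prod_m \ell_{mm}^2)^{1/M}=|\det\bA|^{2/M}(\prod_m g_{mm}^2)^{1/M}$, any $\bA$ with $|\det\bA|>1$ can only increase this lower bound; more carefully, one uses that column operations turning $\bA$ into a unimodular matrix with the same lattice can only help, because adding integer multiples of one column of $\bG^T\bA^T$ to another (a unimodular change) leaves the lattice fixed while the sorted Gram-Schmidt norms of the optimal ordering are a lattice invariant in the relevant sense. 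The clean statement is that $\min_{\bA\text{ full rank}}\max_m\ell_{mm}^2$ is attained at a unimodular $\bA$, which follows because the objective for the best column-reduction of any $\bA$ depends only on $\Lambda(\bG^T\bA^T)\supseteq\Lambda(\bG^T)$, and a proper superlattice has Gram-Schmidt profile dominating that of $\Lambda(\bG^T)$.

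With the translation in place, the final step is to invoke the defining property of a KZ basis (Definition~\ref{def:KZ}): if $\bG^T\bA^T=[\bff_1\ \cdots\ \bff_M]$ is KZ-reduced, then $\bff_i^*$ is a shortest nonzero vector in the projected lattice $\mathcal{P}_i(\Lambda(\bG^T))$ for every $i$. I would argue that this implies $\max_m \|\bff_m^*\|^2=\max_m \ell_{mm}^2$ is minimized over all bases of $\Lambda(\bG^T)$, hence over all unimodular $\bA$. The argument: for any competing basis and any ordering of it, consider the largest-index position; by a standard exchange/induction on $M$, the minimum achievable value of $\max_m\|\bff_m^*\|^2$ equals $\max_m$ of the "successive Gram-Schmidt minima" $\lambda_i^{\text{GS}}\triangleq\min\{\|\bff_i^*\|: \bff_i^* \text{ arises as the }i\text{th GS vector of some basis}\}$, and the KZ conditions force $\|\bff_i^*\|=\lambda_i^{\text{GS}}$ exactly. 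The reordering freedom in~\eqref{SIFopt} (any decoding order) matches the permutation freedom in choosing which KZ vector plays which role.

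The main obstacle I anticipate is the last step: proving rigorously that KZ reduction minimizes $\max_m\|\bff_m^*\|$ and not merely $\|\bff_1^*\|$. KZ reduction is classically defined to successively minimize the \emph{first} Gram-Schmidt norm of each projected sublattice, which is a greedy criterion; one must show this greedy choice also solves the min-max problem. The resolution is an induction on $M$ combined with the fact that the $i$th projected minimum $\lambda_i^{\text{GS}}$ is non-increasing in the sense needed, so the greedy KZ choice at step $1$ does not preclude achieving $\lambda_i^{\text{GS}}$ at later steps — precisely because KZ reduction of the projected lattice $\mathcal{P}_2(\Lambda)$ is again KZ. Handling the interaction with the permutation $\mathbf{\pi}$ and confirming the sign/ordering conventions relating Cholesky of $(\bA\bG)(\bA\bG)^T$ to QR of $\bG^T\bA^T$ are the remaining bookkeeping points, but the conceptual crux is the min-max optimality of the greedy KZ profile.
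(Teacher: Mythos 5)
Your setup is sound and matches the paper's: identifying $\ell_{mm}$ with the Gram--Schmidt norms $\|\bff_m^*\|$ of the basis $\bF=\bG^T\bA^T$ via the uniqueness of the Cholesky factorization of $\bA\bG\bG^T\bA^T$, and recognizing that unimodular $\bA$ corresponds to a change of basis of the fixed lattice $\Lambda(\bG^T)$. But the crux --- that the greedy KZ choice actually solves the min--max problem $\min_{\bA}\max_m\ell_{mm}^2$ --- is exactly where your argument breaks. Your proposed mechanism is that the KZ conditions force $\|\bff_i^*\|=\lambda_i^{\mathrm{GS}}$, the per-index minimum of the $i$th Gram--Schmidt norm over all bases. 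This is false: for $\Lambda=\ZZ^2$ the KZ basis is the standard one with $\|\bff_2^*\|=1$, yet the unimodular basis $\bff_1=(2,1)^T$, $\bff_2=(1,1)^T$ has $\|\bff_2^*\|^2=1/5$, so $\lambda_2^{\mathrm{GS}}<\|\bff_2^*\|_{\mathrm{KZ}}$. Consequently the identity $\min_{\text{basis}}\max_i\|\bff_i^*\|^2=\max_i\lambda_i^{\mathrm{GS}}$ cannot be established the way you propose, and your induction sketch (``KZ of the projected lattice is again KZ'') never confronts the real difficulty: a competing basis need not start with a short vector, so its projections $\tilde{\mathcal{P}}_i$ are taken with respect to a different flag and cannot be compared to $\mathcal{P}_i$ index by index. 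The paper's proof supplies precisely the missing comparison: it proves $\max_{k\le m}\ell_{kk}^2\le\max_{k\le m}\tilde{\ell}_{kk}^2$ by induction on $m$, splitting on whether $\Span(\ba_1,\ldots,\ba_{m-1})=\Span(\tilde{\ba}_1,\ldots,\tilde{\ba}_{m-1})$; in the unequal case it picks the smallest $j$ with $\tilde{\ba}_j\notin\Span(\ba_1,\ldots,\ba_{m-1})$, uses $\|\mathcal{P}_m(\bx)\|\le\|\tilde{\mathcal{P}}_j(\bx)\|$ (since the former projects away a larger subspace), and bounds $\ell_{mm}^2\le\|\mathcal{P}_m(\bG^T\tilde{\ba}_j)\|^2\le\tilde{\ell}_{jj}^2$. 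That exchange step is the idea your proposal is missing.

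Two secondary points. First, your reduction to unimodular $\bA$ has the inclusion backwards: for integer $\bA$ with $|\det\bA|>1$, $\Lambda(\bG^T\bA^T)$ is a proper \emph{sublattice} of $\Lambda(\bG^T)$, not a superlattice; the domination of its Gram--Schmidt profile is plausible but itself needs proof. The paper avoids this entirely by running the induction against an arbitrary full-rank integer competitor $\tilde{\bA}$ --- the key inequality only uses that $\tilde{\ba}_j$ is an integer vector outside the span, so no separate unimodularity reduction is required. Second, the AM--GM observation $\max_m\ell_{mm}^2\ge|\det\bA|^{2/M}(\prod_m g_{mm}^2)^{1/M}$ does not by itself rule out non-unimodular $\bA$, since it only compares against the determinant bound rather than against the optimal unimodular value.
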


\vspace{1mm}

\begin{proof}
Let $\bA$ be a unimodular matrix such that $\bG^T\bA^T$ is a KZ basis of the lattice $\Lambda(\bG^T)$. Such a matrix always exists~\cite{mg02}. Let $\ba_i^T$ be the $i$th row of $\bA$, and let $\bL$ be the lower triangular matrix defined in~\eqref{cholnp}. Note that $\ell_{ii}$ depends only on $\{\ba_1,\ldots,\ba_{i}\}$, and is independent of $\{\ba_{i+1},\ldots,\ba_{M}\}$.

We first show that out of all integer-valued vectors that are linearly independent of $\{\ba_1,\ldots,\ba_{i-1}\}$, $\ba_i$ yields the minimum value of $\ell^2_{ii}$. Let $\bF=[\bff_1 \ \cdots \ \bff_M]=\bG^T\bA^T$ and let $\bF^*=[\bff_1^* \ \cdots \ \bff_M^*]$ be the corresponding Gram-Schmidt orthogonalized basis, such that $\bF=\bF^*\cdot\bR$ for an upper triangular matrix $\bR$ with unit diagonal. We further define the unitary matrix $\bU\triangleq\bF^*\cdot\diag(\|\bff^*_1\|^{-1},\ldots,\|\bff^*_M\|^{-1})$ and the upper triangular matrix $\bL^T\triangleq \diag(\|\bff^*_1\|,\ldots,\|\bff^*_M\|)\cdot \bR$, such that $\bF=\bU\bL^T$ and
\begin{align}
\bF^T\bF=\bA\bG\bG^T\bA^T=\bL\bL^T.\nonumber
\end{align}
From the uniqueness of the Cholesky decomposition, it follows that the $\bL$ defined above is the same as in~\eqref{cholnp}.
Define the projection functions $\mathcal{P}_i(\bx)=\sum_{j\geq i}(\bx^T \bff_j^* / \|\bff_j^*\|^2)\bff_j^*$ that project $\bx$ onto $\Span(\bff_i^*,\ldots,\bff_M^*)$. We have
\begin{align}
\ell^2_{ii}=\|\bff_i^* \|^2=\|\mathcal{P}_{i}(\bff_i)\|^2=\|\mathcal{P}_{i}(\bG^T\ba_i)\|^2.\label{lii}
\end{align}
By definition of the KZ reduction, $\mathcal{P}_{i}(\bG^T\ba_i)$ is a shortest nonzero vector in $\mathcal{P}_{i}(\Lambda(\bG^T))$, which means that
\begin{align}
\ba_i=\argmin_{\substack{{\ba\in\ZZ^{M}}\\ {\rank(\ba_1,\ldots,\ba_{i-1},\ba)=i}}}\|\mathcal{P}_{i}(\bG^T\ba)\|^2=\argmin_{\substack{{\ba\in\ZZ^{M}}\\ {\rank(\ba_1,\ldots,\ba_{i-1},\ba)=i}}}\ell^2_{ii},\label{aieq}
\end{align}
as desired, where the last equality follows from~\eqref{lii}.

To establish the optimality of $\bA$ for successive IF, it remains to show that a greedy procedure that for each $i$ selects the $\ba_i$ as in~\eqref{aieq} also minimizes the value of $\max_{k=1,\ldots,M}\ell^2_{kk}$. Let $\tilde{\bA}=[\tilde{\ba}_1 \ \cdots \ \tilde{\ba}_M]$ be a ``competing'' full-rank matrix with integer-valued entries. Define the matrix $\tilde{\bF}=[\tilde{\bff}_1 \ \cdots \ \tilde{\bff}_M]=\bG^T\tilde{\bA}$ and define the matrix $\tilde{\bL}$ as the lower triangular matrix in the Cholesky decomposition of $\tilde{\bF}^T\tilde{\bF}$, i.e., $\tilde{\bF}^T\tilde{\bF}=\tilde{\bL}\tilde{\bL}^T $. For the choice $\tilde{\bA}$, the achievable rate for successive IF is dictated by $\max_{k=1,\ldots,M}\tilde{\ell}^2_{kk}$, where $\tilde{\ell}_{kk}$ are the diagonal entries of $\tilde{\bL}$. Let $\tilde{\bF}^*=[\tilde{\bff}_1^* \ \cdots \ \tilde{\bff}_M^*]$ be the Gram-Schmidt orthogonalized basis corresponding to $\tilde{\bF}$ and define the projection functions $\tilde{\mathcal{P}}_i(\bx)=\sum_{j\geq i}(\bx^T \tilde{\bff}_j^* / \|\tilde{\bff}_j^*\|^2)\tilde{\bff}_j^*$ that project $\bx$ onto $\Span(\tilde{\bff}_i^*,\ldots,\tilde{\bff}_M^*)$. Note that we have $\tilde{\ell}^2_{ii}=\|\tilde{\mathcal{P}}_i(\bG^T\tilde{\ba}_i)\|^2$.

In order to prove that $\bA$ is optimal, we show by induction that for each $m=1,\ldots,M$
\begin{align}
\max_{k=1,\ldots,m}\ell^2_{kk}\leq \max_{k=1,\ldots,m}\tilde{\ell}^2_{kk}.\label{induction}
\end{align}
The induction hypothesis~\eqref{induction} holds for $m=1$ since $\ell^2_{11}=\|\bff_1\|^2$, and by definition of the KZ reduction $\bff_1$ is a shortest vector in $\Lambda(\bG^T)$. We assume~\eqref{induction} holds for $m-1$ and show that it also holds for $m$. We have
\begin{align}
\max_{k=1,\ldots,m}\ell^2_{kk}&=\max\left(\ell^2_{mm},\max_{k=1,\ldots,m-1}\ell^2_{kk} \right)\nonumber\\
&\leq \max\left(\ell^2_{mm},\max_{k=1,\ldots,m-1}\tilde{\ell}^2_{kk} \right)\label{indhyp}\\
&=\max\left(\|\mathcal{P}_m(\bG^T\ba_m)\|^2,\max_{k=1,\ldots,m-1}\tilde{\ell}^2_{kk} \right)\label{maxineq}
\end{align}
where~\eqref{indhyp} follows from the induction hypothesis.

If $\Span(\ba_1,\ldots,\ba_{m-1})=\Span(\tilde{\ba}_1,\ldots,\tilde{\ba}_{m-1})$ we have $\|\mathcal{P}_{m}(\bx)\|=\|\tilde{\mathcal{P}}_m(\bx)\|$ for any $\bx\in\RR^M$. Therefore,
\begin{align}
\|\mathcal{P}_m(\bG^T\ba_m)\|^2\leq\|{P}_m(\bG^T\tilde{\ba}_m)\|^2=\|\tilde{\mathcal{P}}_m(\bG^T\tilde{\ba}_m)\|^2=\tilde{\ell}^2_{mm},\label{samespan}
\end{align}
where the first inequality follows from the definition of the KZ reduction. Substituting~\eqref{samespan} into~\eqref{maxineq} gives~\eqref{induction}.

If $\Span(\ba_1,\ldots,\ba_{m-1})\neq\Span(\tilde{\ba}_1,\ldots,\tilde{\ba}_{m-1})$, let $j$ be the smallest index for which $\tilde{\ba}_j\notin \Span(\ba_1,\ldots,\ba_{m-1})$. It follows that $\Span(\tilde{\ba}_1,\ldots,\tilde{\ba}_{j-1})\subset\Span(\ba_1,\ldots,\ba_{m-1})$ and therefore $\|\mathcal{P}_{m}(\bx)\|\leq \|\tilde{\mathcal{P}}_{j}(\bx)\|$ for any $\bx\in\RR^M$.
We have
\begin{align}
\|\mathcal{P}_m (\bG^T\ba_m)\|^2&=\min_{\substack{{\ba\in\ZZ^{m}}\\{\rank(\ba_1,\ldots,\ba_{m-1},\ba)=m}}}\|\mathcal{P}_{m}(\bG^T\ba)\|^2 \label{eq1}\\
&\leq \|\mathcal{P}_{m}(\bG^T\tilde{\ba}_j)\|^2\label{ineq2}\\
&\leq \|\tilde{\mathcal{P}}_{j}(\bG^T\tilde{\ba}_j)\|^2\label{ineq3}\\
&=\tilde{\ell}^2_{jj},\label{difspan}
\end{align}
where~\eqref{eq1} follows from~\eqref{aieq}, \eqref{ineq2} follows since $\tilde{\ba}_j\notin \Span(\ba_1,\ldots,\ba_{m-1})$ and is therefore included in the minimization space, and \eqref{ineq3} follows since $\|\mathcal{P}_{m}(\bx)\|\leq \|\tilde{\mathcal{P}}_{j}(\bx)\|$ for any $\bx\in\RR^M$.
Substituting~\eqref{difspan} into~\eqref{maxineq} gives~\eqref{induction}.
\end{proof}

\vspace{1mm}

It is well known~\cite{mg02}, and not too difficult to verify, that if a linearly independent set of lattice vectors $\bS=[\bs_1 \ \cdots \ \bs_M]$ is KZ reduced, then $\bS$ is a basis for the original lattice. For this reason, in contrast to standard IF where the optimal $\bA$ is not necessarily unimodular, for successive IF there is no loss (in terms of achievable rate) in restricting $\bA$ to the class of unimodular matrices. It follows that for uncoded PAM transmission (or equivalently using the $1-D$ integer lattice as codebook), successive IF and lattice-reduction (LR) aided SIC~\cite{wbkk04} are in fact equivalent. Although the advantages of the KZ reduction for lattice-reduction-aided SIC were pointed out in the literature~\cite{ling11}, to the best of our knowledge, there is no prior work on its optimality in terms of minimizing the error probability.

Finding a KZ basis for a lattice is known to be NP-hard in general, as it involves finding a shortest lattice vector, which is itself NP-hard. The following is a recursive procedure for finding a KZ basis $\bF=[\bff_1 \ \cdots \ \bff_M]$ for a rank $M$ lattice $\Lambda$. Let $\bff_1$ be a shortest vector in $\Lambda$, and let $\Lambda'$ be the lattice given by the orthogonal projection of $\Lambda$ on the subspace of $\Span(\Lambda)$ orthogonal to $\bff_1$ (it can be verified that $\Lambda'$ is indeed a lattice). Let $\bc_2,\ldots,\bc_M$ be the KZ basis of $\Lambda'$. Define $\bff_i=\bc_i+\alpha_i\bff_1$, where $\alpha_i\in(-1/2,1/2]$ is the unique number such that $\bff_i\in\Lambda$, for $i=2,\ldots,M$.

For channels of small dimensions the KZ basis can be computed exactly. For large dimensions, it can be approximated by applying the LLL algorithm $M$ successive times, where the dimension of the lattice for which LLL is applied decreases at each iteration. Such as algorithm is described in~\cite[Section VI.D]{aevz02}, and Matlab code for finding an approximation for the optimal $\bA$ based on this method can be found in~\cite{AsicCode}.

\begin{appendices}

\section{Successive IF via MMSE-GDFE}
\label{app:GDFE}
In Section~\ref{sec:SIF}, we described the implementation of successive IF via noise-prediction. In this appendix, we show an equivalent implementation of successive IF where the decoded equations themselves, instead of the effective noises, are used for improving the achievable rates for decoding subsequent equations. As for successive IF via noise-prediction, the derivation relies on linear MMSE estimation theory, and in particular on the MMSE-GDFE framework.

A key fact used in the derivation is that if a certain equation $\bv=[\ba^T\bT]\Mod_c$ can be decoded from $\bY$, then with high probability $\ba^T\bX$ can also be recovered from $\bY$. This fact is proved in~\cite[Lemma 1]{nazer12IZS}, and follows from the same considerations discussed in Section~\ref{subsec:correctzeff}. Therefore, when attempting to decode the equation $\bv_k=[\ba_k^T\bT]\Mod_c$, the receiver already has access to $\ba_1^T\bX,\ldots,\ba_{k-1}^T\bX$.
Thus, for any lower-triangular matrix $\bC$ with diagonal entries equal to zero, the successive IF receiver can produce the effective channel
\begin{align}
\bY_{\text{eff}}&=\left[\bB\bY-\bC\bA\bX+\bA\bD\right]\Mod_c\nonumber\\
&=\left[\bA\bT+\left(\bB\bH-\bA-\bC\bA\right)\bX+\bB\bZ\right]\Mod_c\nonumber\\
&=\left[\bV+\bE\right]\Mod_c,
\end{align}
where
\begin{align}
\bE\triangleq\left(\bB\bH-\bR\bA\right)\bX+\bB\bZ
\end{align}
is statistically independent of $\bT$, and $\bR\triangleq\bI+\bC$.
Note that the constrained structure of $\bC$ ensures that in all steps of the successive decoding procedure the receiver only uses values of $\ba_i^T\bX$ that are already available to it. The filter matrix $\bR$ is monic, i.e., a lower-triangular matrix with unit diagonal entries, and together with $\bB$ can be optimized such as to minimize the generalized covariance matrix $\mathbf{\tilde{K}}_{\bE\bE}$. For a given choice of $\bR$, the filter $\bB$ should be chosen as the optimal linear MMSE estimation filter of $\bR\bA\bX$ from $\bY$, which is given by
\begin{align}
\bB=\bR\bA{\bH}^T\left(\frac{1}{\Tsnr}\bI+{\bH}{\bH}^T\right)^{-1},\nonumber
\end{align}
and the resulting estimation-error generalized covariance matrix is
\begin{align}
\mathbf{\tilde{K}}_{\bE\bE}=\Tsnr\cdot(\bR\bA)\left(\bI+\Tsnr{\bH}^T{\bH}\right)^{-1}(\bR\bA)^T.\label{covgdfe}
\end{align}
Comparing~\eqref{covgdfe} to the estimation error covariance matrix obtained in standard IF
\begin{align}
\Tsnr\cdot\bA\left(\bI+\Tsnr{\bH}^T{\bH}\right)^{-1}\bA^T\nonumber
\end{align}
reveals the advantage of successive IF over standard IF. It essentially allows to decode any full rank-set of equations of the form $\bR\bA\bX$, where $\bR$ is some monic filter, rather than just equations of the form $\bA\bX$.

Recall that the performance of the IF receiver are dictated by the effective variances of the effective noises, i.e., by the diagonal entries of $\mathbf{\tilde{K}}_{\bE\bE}$. Thus, for a given choice of $\bA$, $\bR$ should be chosen such as to minimize the values of these entries. Note that,
\begin{align}
\mathbf{\tilde{K}}_{\bE\bE}=\Tsnr\cdot(\bR\bL)(\bR\bL)^T,\nonumber
\end{align}
where $\bL$ is the lower triangular matrix defined in~\eqref{cholnp}. The $i$th diagonal entry of $\mathbf{\tilde{K}}_{\bE\bE}$ is therefore equal to $\Tsnr$ times the squared Euclidean norm of the $i$th row in the matrix $\bR\bL$. Now, since $\bR$ is monic, we must have $\{\bR\bL\}_{ii}=\ell_{ii}$, which implies that
\begin{align}
\left\{\mathbf{\tilde{K}}_{\bE\bE}\right\}_{ii}\geq \Tsnr\ell_{ii}^2, \ \forall i=1,\ldots,M.\label{gdfevarbound}
\end{align}
The choice
\begin{align}
\bR=\diag(\ell_{11},\ldots,\ell_{MM})\bL^{-1},
\end{align}
attains the bound from~\eqref{gdfevarbound}, and is therefore optimal. The resulting estimation-error generalized covariance matrix is
\begin{align}
\mathbf{\tilde{K}}_{\bE\bE}=\Tsnr\cdot\diag(\ell^2_{11},\ldots,\ell^2_{MM}),\nonumber
\end{align}
and as a result the achievable rates are exactly as in Theorem~\ref{thm:SIC}, which shows that successive IF via noise-prediction or via MMSE-GDFE are indeed equivalent.
\end{appendices}

\bibliographystyle{IEEEtran}
\bibliography{IF_SIC_bib}

\end{document}